%% file: main.tex
\documentclass[12pt]{article}
\usepackage{times,amssymb,amsthm,amsmath,amsfonts,eurosym,geometry,ulem,graphicx,caption,color,setspace,sectsty,comment,footmisc,caption,pdflscape,array,threeparttable,tikz,pgfplots,subcaption,caption,natbib,tabularx} 
\PassOptionsToPackage{draft}{graphicx}
\newsavebox{\measurebox}
\usepackage[scale=2]{ccicons}
\usepackage{pgfplots}
\usepgfplotslibrary{dateplot}
\usetikzlibrary{intersections,shapes.multipart}
\usepackage{pifont}
\usetikzlibrary{tikzmark}
\usetikzlibrary{shapes,arrows,backgrounds,positioning}
\usetikzlibrary{intersections,shapes.multipart}
\usepackage{tikz}
\usetikzlibrary{fit,calc}

 \usepackage{booktabs} 

\usepackage{pdflscape}
\usepackage{afterpage}
\usepackage{capt-of}

\usepackage{footmisc}
\usepackage[colorlinks=true,
citecolor=blue,
linkcolor=blue,
anchorcolor=blue,
urlcolor=blue]{hyperref}

\newtheorem{assumption}{Assumptions}

\makeatletter
\newtheorem*{assumption*}{Assumptions}
\makeatother

\newtheorem{proposition}{Proposition}
\newtheorem{theorem}{Theorem}
\theoremstyle{definition}

\theoremstyle{example}
\newtheorem{example}{Example}

\newcommand{\E}{\mathbb{E}}
\newcommand{\Var}{\mathrm{Var}}
\newcommand{\R}{\mathbb{R}}
\newcommand{\argmin}{\mathop{\mathrm{arg\,min}}}

\newcommand{\Wcal}{\mathcal{W}}

\newcommand{\Phical}{\mathcal{\Phi}}

\newcommand{\one}{\mathbf{1}}

\usepackage[titletoc,toc,title,page,header]{appendix}
\usepackage{minitoc}

\noptcrule
\usepackage{caption}

\title{Nonparametric Identification and Estimation of Causal Effects on Latent Outcomes}

\date{\today}

\author{Jiawei Fu\footnote{Assistant Professor, Duke University (\url{jiawei.fu@duke.edu})} \quad Donald P. Green\footnote{Burgess Professor of Political Science, Columbia University (\url{dpg2110@columbia.edu}).  \\ 
We thank David Broockman, Alex Coppock, Macartan Humphreys, Josh Kalla, Patrick Liu, Ryan Moore, Libby Jenke, Justin Grimmer, Jacob Montgomery, Kosuke Imai, Teppei Yamamoto, participants of the UC San Diego Political Methodology and American Politics Speaker Series and Columbia Political Methodology Workshop for their helpful comments on earlier drafts. We welcome feedback on this draft. A preliminary R package can be found at \href{https://github.com/Jiawei-Fu/LatentOutcomes}{Github}.
}}
\pgfplotsset{compat=1.18} 
\begin{document}

\maketitle
\singlespacing



\begin{abstract}
How should researchers conduct causal inference when the outcome of interest is latent and measured imperfectly by multiple indicators? We develop a general nonparametric framework for identifying and estimating average treatment effects on latent outcomes in randomized experiments. We show that latent-outcome estimation faces two distinct noncomparability challenges. First, across studies, different measurement systems may cause estimators to target different empirical quantities even when the underlying latent treatment effect is the same. Second, within a study, different indicators may have different and possibly nonlinear relationships with the same latent outcome, making them not directly comparable. To address these challenges, we propose a design-based approach built around nonparametric bridge functions. We show that these bridge functions can be characterized and identified. Estimation relies on a debiasing procedure that permits valid inference even when the bridge functions are weakly identified. Simulations demonstrate that standard methods, such as principal components analysis and inverse covariance weighting, can generate spurious cross-study differences, whereas our approach recovers comparable latent treatment effects. Overall, the framework provides both a general strategy for causal inference with latent outcomes and practical guidance for designing measurements that support identification, comparability, and efficient estimation.

\noindent 
\vspace{.1in}

\end{abstract}

\thispagestyle{empty}

\clearpage
\doparttoc 
\faketableofcontents 

\setcounter{page}{1}

\doublespacing

\section{Introduction}

Randomized experiments are widely regarded as the most credible design for estimating causal effects. Yet in many substantive applications, the outcome of real interest is not directly observed. Researchers often care about latent constructs such as ideology, state capacity, political trust, social capital, mental health, cognitive ability, or human capital. These quantities are theoretically central, but they are not observed directly. Instead, researchers measure them indirectly through survey items, tests, administrative indicators, behavioral traces, or other proxies. Although the experimental literature has made enormous progress on identification and estimation under random assignment, it has paid far less attention to a basic and pervasive problem: how to conduct causal inference when the outcome itself is latent and must be inferred from multiple imperfect measurements. Existing treatments of experiments typically proceed as if the observed outcome were the outcome of interest, leaving unresolved how one should identify and estimate treatment effects when several noisy measurements are available and none is privileged a priori.

The problem is pervasive across the social sciences. A study of ideology may rely on several roll-call measures or survey batteries; a study of cognitive ability may use different test items; a study of mental health may combine multiple screening instruments; a study of state capacity may draw on several administrative indicators. In each case, the outcome of interest is not any single observed variable, but an unobserved construct that the measurements are intended to capture. In SI \ref{si:summarytab}, we present a set of representative articles published in political science over the past five years. Faced with multiple indicators, researchers typically aggregate them in some way before conducting causal analysis. Common practices include simple averages, principal components analysis (PCA), inverse covariance weighting (ICW) \citep{anderson2008multiple}, item response theory model (IRT) \citep{stoetzer2022causal}, and inverse regression analysis (IRA) \citep{zhang2025inverse}. These approaches are often useful descriptively, but they are not generally designed to study the average causal effect of treatment on the latent outcome itself.

This paper argues that causal inference with latent outcomes poses distinctive comparability challenges that existing approaches do not adequately resolve. The core difficulty is that latent outcomes have no intrinsic metric. Their empirical meaning depends on how they are measured. Once this is recognized, two overlooked challenges come into view.

The first is a study noncomparability challenge: when two studies measure the same latent construct using different sets of indicators, standard dimension-reduction methods generally do not ensure that the resulting low-dimensional outcomes represent the same quantity. Even if the true average latent treatment effect is identical across studies, the estimated effects may differ simply because one measurement differs across the studies. A difference in estimated treatment effects may therefore reflect differences in measurement rather than differences in causal effects. This undermines the accumulation of knowledge across studies.

The second is a measurement noncomparability challenge within a study. Different measurements may relate to the same latent outcome in different ways. Some measurements may be more sensitive to changes in the latent variable than others; some may be nonlinear transformations of the latent outcome; others may emphasize different regions of the latent distribution. As a result, the observed measurements are not directly commensurable. Current methods either impose strong model specifications (such as IRT models or linear models in SEM) or remain agnostic to the latent structure (such as PCA and ICW). As a result, they are either not robust to model misspecification or inefficient due to ignoring the common latent outcome. A satisfactory framework must therefore solve two challenges at once: it must make different measurements comparable within a study, and it must make causal estimands comparable across studies.

This paper develops a general nonparametric framework for causal inference with latent outcomes that clarifies these issues and provides a constructive solution. Our solution is design-based and centers on two ideas: a benchmark measurement and a measurement bridge function. We assume that across studies there exists at least one common measurement that can serve as a benchmark. We then define, for each additional measurement, a nonparametric bridge function that maps that measurement onto the benchmark in expectation conditional on the latent outcome. In this way, heterogeneous measurements are first transformed so that they carry the same latent information in expectation, and only then are they combined for downstream causal analysis. We show that the bridge function can be characterized and identified within a nonparametric instrumental variables framework. The key insight is that we do not need to rely on external instruments; instead, within the experimental setting, treatment assignment, covariates, and additional measurements can all serve as valid instrumental variables.

This approach has several attractive features. First, it does not require the researcher to impose a specific parametric latent-variable model. The relationship between each observed measurement and the latent outcome may be linear or nonlinear and may remain unknown. What matters is whether one can recover a bridge function that aligns the measurement to the benchmark. Second, because the framework begins from the causal estimand rather than from a dimension-reduction algorithm, it clarifies what the resulting estimate means. The goal is not merely to construct a low-variance index, but to identify and estimate the average latent treatment effect.

The framework also yields substantive guidance for empirical practice. It shows that measurement is not merely a nuisance that can be handled after the fact; it is part of the definition of the causal estimand. It therefore matters how researchers design measurements, which measurements they choose to share across studies, and whether the chosen indicators preserve enough variation in the latent outcome to support identification. In particular, the framework highlights the value of designing measurements that are informative about the latent variable and of including at least one benchmark measurement that can anchor comparisons across settings. These design considerations are especially important when researchers hope to compare experimental findings across populations, interventions, or research teams.

\paragraph{Related Literature.} We contribute to the literature on design-based causal inference with latent outcomes. Although the statistical literature on experimental design and analysis has grown markedly in recent years, the issue of imperfect measurement of latent outcomes has not been systematically discussed, even in otherwise comprehensive textbooks such as \citet{angrist2009mostly}, \citet{pearl2009causality}, \citet{gerber2012field}, \citet{imbens2015causal}, and \citet{hernan2020causal}. Recently, \citet{stoetzer2022causal} and \citet{fu2025causal} have developed causal inference frameworks for latent outcomes within the potential outcomes framework. They clarify identification assumptions and propose estimators based on different parametric models, including hierarchical item response models \citep{zhou2019hierarchical} and linear models, respectively. Our approach generalizes these frameworks by adopting a fully nonparametric specification. When latent outcomes are constructed from nonstructural data, the causal inference literature has primarily focused on addressing “learning-induced interference” \citep{landy2025causal,egami2022make}. 

Our identification strategy parallels recent bridge-function approaches in the proximal causal inference literature, where auxiliary variables are used to recover information about unobserved quantities \citep{miao2018identifying,miao2024confounding,tchetgen2020introduction}. Our contribution is to extend this logic to causal inference with latent outcomes. We propose using measurement bridge functions to align imperfect measurements to a common benchmark, so that downstream causal analysis targets a comparable latent estimand. The bridge function is identified through nonparametric instrumental variables, which constitutes an ill-posed inverse problem and poses challenges for estimation \citep{newey2003instrumental,darolles2011nonparametric,horowitz2011applied,chen2025adaptive}. The problem becomes more tractable when the target parameter is a linear functional of the nonparametric function \citep{chen2026thin,severini2012efficiency,santos2011instrumental,zhang2023proximal}. \citet{blundell2007semi} was among the first to use bounded completeness for the identification of NPIV models and to develop efficient estimation methods for finite-dimensional parameters. Our estimation approach builds primarily on \citet{bennett2025inference}.

Our paper also relates to a large literature on the measurement of latent variables. A central insight in this literature is that latent constructs have no intrinsic metric and must be defined through their relationships with observed indicators; they are typically measured with error. Classical work in SEM formalizes this idea by specifying a measurement model that links observed outcomes to an unobserved latent variable, typically through factor loadings and measurement errors \citep{bagozzi1977structural,sorbom1981structural,kano2001structural}. Item response theory (IRT) is also widely used in the social sciences \citep{embretson2013item,zhou2019hierarchical,lord2008statistical}. In contrast to these explicit modeling approaches, when multiple measurements are available, ICW is a dominant method for efficient hypothesis testing \citep{anderson1988structural}. Another commonly used approach is PCA \citep{hotelling1933analysis}. Multiple measurements and factor models are also broadly discussed in the measurement error literature (see the review by \citet{schennach2022measurement}). We argue that when there is a well-defined latent outcome, our method can yield more informative estimates.


\section{Framework}

Consider a sample of $n$ units drawn from a superpopulation. Let $Z_i$ denote the binary treatment assignment for unit $i$. The outcome of interest is latent: $\eta_i^1$ is the latent outcome that would be realized if unit $i$ is assigned to receive the treatment, and $\eta_i^0$ is the latent outcome that would be realized in the absence of treatment. We assume that the stable unit treatment value assumption (SUTVA) holds \citep{rubin1980randomization}.\footnote{The SUTVA assumption is a standard requirement for any study of cause and effect; in this context, it implies that potential outcomes remain stable regardless of which subjects receive treatment, ruling out, for example, spillovers between subjects. Formally, (1) $\eta_i(z_i,z_{-i})=\eta_i(z_i,z'_{-i}) \forall z_{-i}, z'_{-i}$; (2) If $Z_i=z$, then $\eta_i=\eta_i(z)$, $\forall i$ and $z \in Z$.} These latent variables are not directly observed. Examples include ideology and state capacity in political science, preferences and human capital in economics, mental health or cognitive ability in psychology, and socioeconomic status and social capital in sociology. These constructs are conceptually important and constitute the true outcomes of interest.

In order to study them, researchers typically measure latent outcomes by treating them as unobserved constructs and linking them to observable indicators. Formally, researchers design measurement devices, which are distributions indexed by $\eta_i$, $P_j(\cdot|\eta_i)$, to measure the latent outcome. Therefore, each observed measure $Y_{ij}, j=1,2,...,J$, is a random draw from $P_j(\cdot|\eta_i)$. If the distribution is degenerate, the measure is error-free; otherwise, the measurement is imperfect, and we focus on this latter case. A simple and widely used classical specification is $Y_{ij}= \lambda_j \eta_i + \epsilon_{ij}$, where $\lambda_j$ is the factor loading that captures how the latent variable systematically affects the measurement, and $\epsilon_{ij}$ is the measurement error with mean zero, $\mathbb{E}[\epsilon_{ij}]=0$; for example, it may follow a normal distribution $N(0,\sigma^2_{\epsilon_j})$. In this case $Y_{ij}\sim P_j(\cdot|\eta_i)=N(\lambda_j\eta_i,\sigma^2_{\epsilon_j})$. Another widely used model is item response theory: $P(Y_{ij}=1|\eta_i)=\frac{1}{1+exp(-a_j(\eta-b_j))}$, where $\eta_i$ is interpreted as latent ability, and the parameters $a_j$ and $b_j$ capture item discrimination and difficulty, respectively. In our framework, we do not impose a specific functional form; instead, the measurements can have any unknown relationship with the latent outcome.

We are interested in the causal effect of treatment on the latent variable, which \citet{stoetzer2022causal} and \citet{fu2025causal} call the latent treatment effect (LTE). The individual-level LTE is defined as $\tau_i=\eta^1_{i}-\eta^0_{i}$. Because it is impossible to observe both potential latent outcomes simultaneously, we focus on the average latent treatment effect (ALTE): $\tau=\mathbb{E}[\eta^1_{i}-\eta^0_{i}]$.

Figure \ref{fig:dgp} illustrates the key elements of the framework. Treatment $Z_i$ affects the latent outcome $\eta_i$. The latent outcome can be written as $\eta_i= \mathbb{E}\eta_i^0 + \tau Z_i + \zeta_i$, where $\zeta_i$ captures how individual treatment effects deviate from the average: $\zeta_i=\eta_i^0 - \mathbb{E}\eta^0_i+Z_i[(\eta_i^1-\mathbb{E}\eta^1_i)-(\eta_i^0-\mathbb{E}\eta^0_i)]$. This term has mean zero by construction, $\mathbb{E}\zeta_i=0$. The measurements $Y_{ij}$ are connected to the same latent outcome. As long as the measurement is not a deterministic function of the latent outcome, we expect the presence of measurement error $\epsilon_{ij}$. Throughout the paper, we assume that each measure $Y_{ij}$ is not independent of $\eta_i$. In other words, the observed measurements indeed capture some aspects of the latent outcome. How to design informative measurement devices is itself an important topic; however, in this study, we primarily focus on the downstream question of how to identify and estimate causal effects on latent outcomes.


\begin{figure}[!h]
    \centering
    \includegraphics[width=0.8\linewidth]{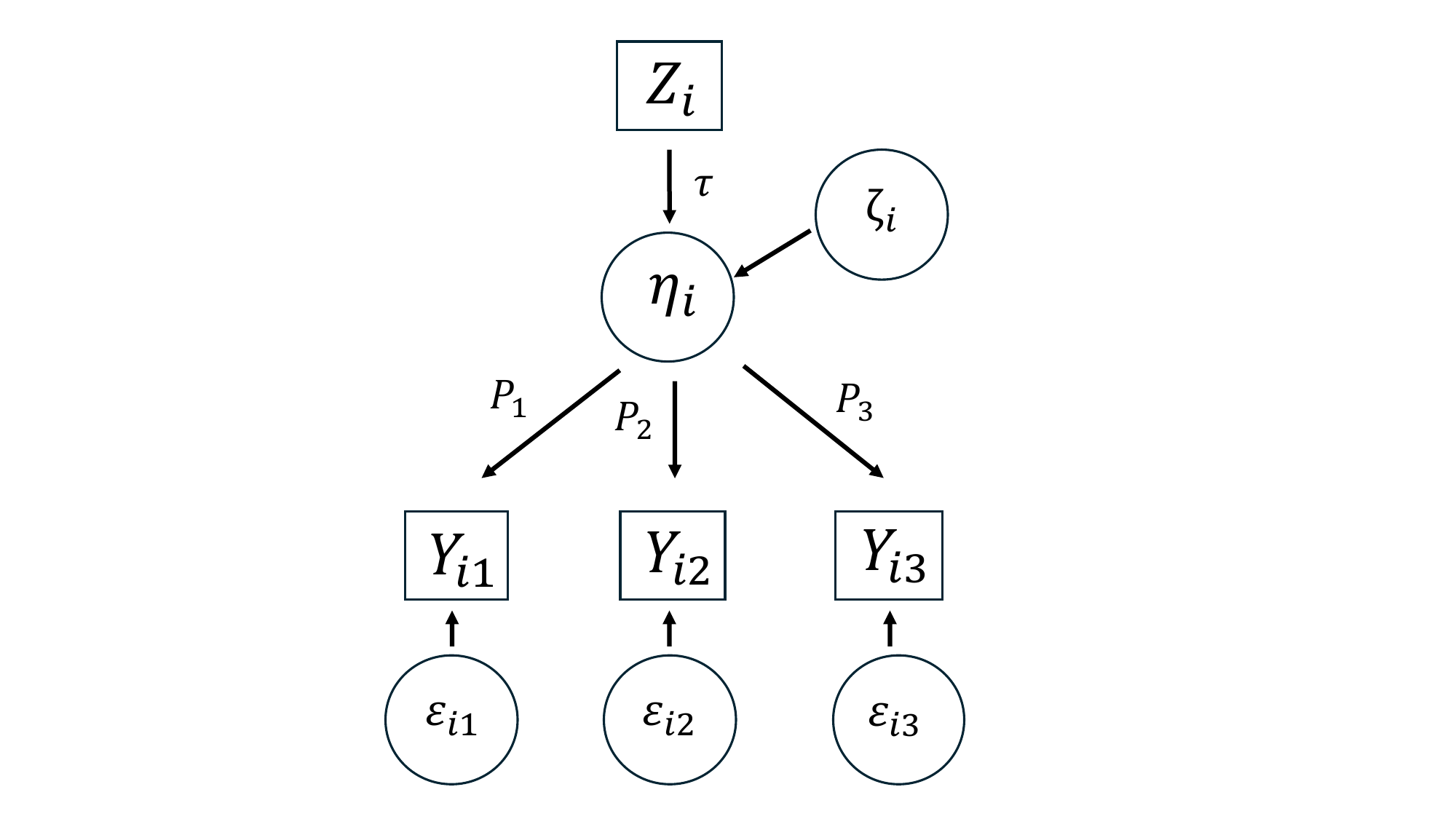}
    \caption{Graphical Depiction of an Experimental Design in which a Latent Outcome is Measured Linearly by Three Outcomes, Each Measured with Error.}
    \label{fig:dgp}
\end{figure}

For ease of exposition, we omit $i$ when it does not create confusion.


\section{Dual Noncomparability Challenges of ALTE}

The latent outcome is not directly observed and has no intrinsic metric. It depends heavily on the observed measurements. For example, unobservable mathematical ability is primarily reflected in observed test scores, while unobservable ideology is primarily reflected in observed roll-call votes. This dependence on measurement creates important, yet long-overlooked, challenges in studying causal effects on latent outcomes. In this section, we illustrate two noncomparability challenges arising from the nature of latent outcomes.

\subsection{Study Noncomparability Challenge}

Latent outcomes must be measured. It is common for researchers to use different approaches to measure the same latent variable due to differing theoretical considerations or feasibility constraints. There is rarely consensus on a single criterion for measuring a latent outcome. However, the use of different measurements creates a serious challenge: even when researchers seek to identify average treatment effects on the same latent outcome across studies, the resulting estimates are often not comparable. We refer to this issue as the \textit{Study Noncomparability Challenge}.

Consider two studies that target the same latent outcome variable $\eta$. Suppose two research teams apply largely the same set of measurements, except that one measurement differs. Let $(Y_1,Y_2,...,Y_J)$ denote the measurements used in the first study and $(Y_1,Y_2,\ldots,Y'_J)$ those used in the second study, where $Y_J\sim P_j(\cdot\mid\eta_i)$ while $Y'_J\sim \tilde{P}_j(\cdot\mid\eta_i)$, and the two distributions are assumed to differ.

Given multiple measurements, researchers typically apply dimension-reduction techniques, including simple averaging, PCA, ICW, IRT, IRA. We denote these methods collectively as a mapping $M: \mathbb{R}^J \rightarrow \mathbb{R}$, which produces a single variable $\tilde{Y}_A=M(Y_1,\ldots,Y_J)$ and $\tilde{Y}_B=M(Y_1,\ldots,Y'_J)$ for each individual in studies A and B, respectively. The resulting low-dimensional variables ($\tilde{Y}_A$ and $\tilde{Y}_B$) are then treated as outcomes in downstream causal inference in the two studies.

The Study Noncomparability Challenge arises because the mapping $M$ does not guarantee that the same low-dimensional variable is generated when the underlying measurements differ. Because we focus on ALTE, the relevant form of the Study Noncomparability Challenge is noncomparability at the level of expectations: $\mathbb{E}[\tilde{Y}_A|\eta_i] \neq \mathbb{E}[\tilde{Y}_B|\eta_i]$. Even though $\tilde{Y}_A$ and $\tilde{Y}_B$ are intended to capture the same latent outcome $\eta$, they may in fact encode different information.

Suppose two studies investigate the same treatment $Z_i$. Even if the true ALTE, $\E[\eta^{1}-\eta^0]$, is identical across the two studies, the difference in the $J^{th}$ measurement can lead to different estimated average causal effects: $\E[\tilde{Y}_A^1-\tilde{Y}_A^0] \neq E[\tilde{Y}^1_B-\tilde{Y}^0_B]$. If the treatments differ across studies, the resulting average causal effects become even less comparable and may be misleading. For example, if the estimated treatment effect in study A is larger than in study B, this difference may simply reflect the use of measurement procedures that do not account for discrepancies in the $J^{th}$ measurement, even when the true effects are identical or even reversed. This phenomenon is problematic because the estimated causal effect depends on the measurement choices rather than the underlying causal quantity. From the perspective of scientific inquiry, this undermines the accumulation of knowledge across studies.

The root cause of this problem is that latent variables have no intrinsic metric—their meaning depends on the measurements used to define them. Existing methods typically standardize and combine these measurements in an ad hoc manner, which in turn alters the interpretation of the latent outcome depending on the scaling and composition of the observed variables.



\subsection{Measurement Noncomparability Challenge}

In addition to study-level noncomparability, there is another often overlooked issue: even within a single study, different measurements may not be comparable to one another. In practice, researchers frequently employ multiple measurements for an abstract latent outcome because they believe that different measures capture distinct aspects of the latent variable. For example, mathematical ability can be assessed using algebra, geometry, and analysis questions, as these are thought to capture different dimensions of mathematical ability. Alternatively, one may view mathematical ability as being defined by multiple subcomponents. We call this the \emph{Measurement Noncomparability Challenge}: $\mathbb{E}[Y_{j}|\eta] \neq \mathbb{E}[Y_{k}|\eta]$ for $j \neq k$.

There are two broad responses. First, researchers explicitly model the relationship between the latent outcome and the observed measurements. For instance, in a standard structural equation model (SEM), researchers assume a linear model $Y_{j}= \lambda_j \eta + \epsilon_{j}$, where the factor loadings $\lambda_j$ are allowed to differ across measurements. Similarly, in the IRT model, $P(Y_{ij}=1|\eta_i)=\frac{1}{1+exp(-a_j(\eta-b_j))}$, each item has its own discrimination parameter $a_j$ and difficulty parameter $b_j$. The goal is to recover these finite-dimensional parameters and thereby identify the latent outcome. A key concern, however, is that the model may be misspecified, and therefore, the result is not robust.

The second strand of the literature does not impose a parametric measurement model. Instead, it does not attempt to recover the latent outcome directly, but rather constructs a lower-dimensional variable based on a chosen criterion. For example, PCA aims to obtain a weighted average of the measurements that maximizes variance among all possible linear combinations, known as the first principal component score, which is then used as the outcome variable. In contrast, ICW combines measurements so as to minimize variance. Because it is used as an outcome variable, lower variance implies higher statistical power. However, neither approach is designed specifically for causal inference: PCA is primarily used for dimension reduction, while ICW is typically motivated by hypothesis testing rather than the estimation of causal effects. When researchers posit a latent structure—namely, that the measurements are intended to capture the same latent outcome, as illustrated in Figure \ref{fig:dgp}—ignoring this information would incur unnecessary \textit{efficiency loss}.

\subsection{Our Solution: Nonparametric Scaled Index (NSI)}

A satisfactory estimator must therefore address two challenges simultaneously: it must render different measurements comparable within a study while remaining robust to model misspecification and preserving information, and it must ensure that causal estimands are comparable across studies.

First, because our object of interest is the ALTE and the multiple measurements are designed to capture the same latent outcome, it is desirable to extract a common latent signal from these measurements. Ignoring the latent structure would result in a loss of useful information. Moreover, because the latent outcome is measurement-dependent, ignoring the latent structure may exacerbate the study noncomparability challenge. However, measurement noncomparability suggests that we require a bridge function that transforms different measurements so that they convey the same information, at least in expectation, before extracting the common latent signal. Formally, we require a measurement bridge function $\varphi_j$ for measurement $Y_j$, so that $\mathbb{E}[\varphi_j(Y_j) \mid \eta] = \mathbb{E}[Y_1 \mid \eta]$. Therefore, conditional on latent outcome $\eta$, all measurements on expectation are equal to $Y_1$. $Y_1$ here is chosen arbitrarily.

Next, to address the study noncomparability challenge, the dimension-reduction mapping $M$ must preserve the common information shared across different sets of measurements. This is a strong requirement, as it must hold for all possible measurement sets, while the latent outcome is, in principle, defined through multiple measurements. Rather than imposing strong restrictions on the mapping $M$, we propose a design-based approach. Specifically, we require the existence of a common measurement across different studies. Without loss of generality, we denote this as the first measurement, $Y_1$. This variable serves as a bridge linking different sets of measurements.

The following flow chart \ref{fig:flow} summarizes our proposed method. First, we select a benchmark variable, denoted by $Y_{1}$. This measurement serves to link different sets of measurements. Next, we identify a measurement bridge function $\varphi_j$ that transforms other measurements $Y_j$, $j \neq 1$, such that \begin{equation}\label{equ:meabridge}
    \mathbb{E}[Y_1 \mid \eta] = \mathbb{E}[\varphi_j(Y_j) \mid \eta].
\end{equation}
This bridge function maps each measurement onto the reference measurement $Y_{1}$ so that all measurements contain the same information in expectation. Because $Y_j$ and $Y_1$ are functions of the same latent variable $\eta$, it is natural to expect that such a bridge function exists; we will formalize this in the next section. If such a bridge function does not exist, it is generally impossible to extract the same information from different measurements.

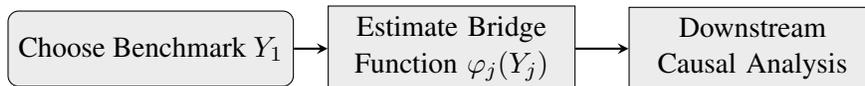
\begin{figure}[!h]
    \centering
    \begin{tikzpicture}[
    node distance=4cm,
    every node/.style={font=\small},
    startstop/.style={
        rectangle, rounded corners,
        minimum width=3cm, minimum height=1cm,
        draw=black, fill=gray!15
    },
    process/.style={
  rectangle,
  fill=gray!15,
  draw,
  text width=3cm,
  align=center,
  minimum height=1cm
},
    arrow/.style={thick,->,>=stealth}
]

\node (start) [startstop] {Choose Benchmark $Y_{1}$};
\node (step1) [process, right of=start] {Estimate Bridge Function $\varphi_j(Y_{j})$};
\node (step3) [process, right of=step1] {Downstream Causal Analysis};

\draw [arrow] (start) -- (step1);
\draw [arrow] (step1) -- (step3);
\end{tikzpicture}
    \caption{Estimation Procedure}
    \label{fig:flow}
\end{figure}

Because the latent outcome has no intrinsic scale or unit, we interpret it using the same metric of $Y_1$. In other words, we treat $Y_1$ and the nonparametrically transformed measurements as outcome variables. We then use them for downstream causal analysis. For example, we can combine the transformed measurements through a weighted average, $\tilde{Y} = \sum_{j=1}^J \omega_j \varphi_j(Y_j)$, with weights satisfying $\sum_{j=1}^J \omega_j = 1$. The optimal weights that minimize variance can be used. We then treat $\tilde{Y}$ as the new observed outcome to estimate average treatment effects. 

Under this method, all measurements within a single study contain the same information in expectation and can therefore be jointly used for causal inference. Moreover, consider two studies, A and B, that target the same latent variable $\eta$ but use different sets of measurements. As long as they share at least one common variable—say $Y_1$—the resulting transformed variables $\varphi_j(Y_j)$ in both studies are mean-comparable.

The idea of using one measurement as a benchmark, developed in the 1960s and 1970s, has largely been overlooked in recent work. In the SEM framework, $Y_{j}= \lambda_j \eta + \epsilon_{j}$, and because the latent outcome has no intrinsic scale, identification typically relies on normalization. There are generally two approaches. One is to impose that $\eta$ has mean zero and unit variance; the other is to set $\lambda_1=1$. When $\lambda_1=1$, this assigns the scale of the first measurement to the latent outcome. This normalization is without loss of generality because the latent outcome has no intrinsic scale. Under this normalization, the latent outcome is interpreted on the same scale as the first measurement. Although this idea is well known in the SEM literature, current practice typically adopts the first normalization, which can generate noncomparability challenges. \citet{fu2025causal} illustrates this issue and proposes an optimal estimator under the linear model specification, called (W)egihted (S)caled (I)ndex, WSI. We generalize their estimator to the nonparametric setting. Because the measurements are (N)onparametrically (S)caled to form a comparable (I)ndex, we refer to this approach as NSI.

For the proposed method to be meaningful—so that $Y_1$ captures the latent variable $\eta$—we can impose the following centering assumption.

\begin{assumption}[centering] 
     Measurement $Y_{1}$ is centering in the sense $\mathbb{E}[Y_{1}|\eta] =\eta$.
\end{assumption}

This assumption implies that the measurement provides an unbiased representation of the latent outcome. Many methods have been proposed to obtain unbiased latent variable from biased measurement \citep{vishnubhatla2026proxy}. If multiple centering measurements exist, the choice of benchmark variable is inconsequential. We emphasize that only one centering measurement is required, and we allow the remaining measurements to have arbitrary (possibly nonlinear) relationships with the latent variable $\eta$. For these non-centering measurements, we use a bridge function $\varphi_j$ to transform them so that they are centered in the same way as $Y_1$.

\section{Nonparametric Identification and Estimation}

To address the challenge of noncomparability, we propose NSI using a benchmark variable and measurement bridge functions. This section introduces the identification assumptions and estimation techniques.

\subsection{Existence of the Measurement Bridge Function}

Different measurements may have different relationships with the same latent outcome. A measurement bridge function is therefore required to recover a common representation of the latent outcome. Under what conditions does such a bridge function exist? As we will show later, this problem can be formulated as a nonparametric instrumental variables (NPIV) problem.

It is convenient to approach this problem from a functional analysis perspective. Define the Hilbert spaces $H_1 = L^2(F(Y_{ij}))$ and $H_2 = L^2(F(\eta_i))$, with inner product $\langle h_1, h_2 \rangle = \mathbb{E}[h_1 h_2]$, which denote the spaces of all square-integrable functions with respect to the cumulative distribution function $F$, respectively. Let $K$ be a conditional expectation operator mapping $H_1$ to $H_2$, defined by $K\varphi=\mathbb{E}[\varphi(Y_{ij})|\eta_i]$, and define $r(\eta_i) = \mathbb{E}[Y_{i1} \mid \eta_i]$. Then the condition $\mathbb{E}[\varphi_j(Y_{ij}) \mid \eta_i] = \mathbb{E}[Y_{i1} \mid \eta_i]$ can be written compactly as $K\varphi = r$. This equation is known as a Fredholm integral equation of the first kind.

From functional analysis, the measurement bridge function $\varphi$ exists if and only if $r$ lies in the range of the operator $K$. In general, this condition is not easy to verify. The common sufficient condition that has practical implications is the completeness condition.

\begin{assumption}[Completeness]\label{ass:complete}
    For any square-integrable function $g$ and for any $Y_{ij}$,
    $\mathbb{E}[g(\eta_i)|Y_{ij}]=0$ almost surely if and only if $g(\eta_i)=0$ almost surely.
\end{assumption}

This assumption first implies that $\eta_i$ and $Y_{ij}$ are not independent. Intuitively, it requires that the measurement $Y_{ij}$ captures the full information (variability) of the latent outcome $\eta_i$. It accommodates both categorical and continuous variables. For example, if $\eta_i$ is discrete with $k$ categories and $Y_{ij}$ has $m$ categories, then we must have $m \ge k$. To see this, 
$$
\underbrace{
\begin{bmatrix}
    P(\eta^1_1|y^1_{ij}) &...& P(\eta^k_1|y^1_{ij})\\
    \vdots & &\vdots\\\
    P(\eta^1_1|y^m_{ij}) &...& P(\eta^k_1|y^m_{ij})
\end{bmatrix}}_{A}
\underbrace{
\begin{bmatrix}
    g(\eta^1_1)\\
     \vdots \\
     g(\eta^k_1)
\end{bmatrix}}_{g}
=
\begin{bmatrix}
    \mathbb{E}[g(\eta_i)|y^1_{ij}]\\
     \vdots \\
    \mathbb{E}[g(\eta_i)|y^m_{ij}]
\end{bmatrix}
$$
Completeness ($Ag=0 \Rightarrow g=0$) requires that the matrix $A$ has full rank, which can only hold if $m \ge k$. Many models, such as those in the exponential family, satisfy completeness.

This issue also arises in the proximal causal inference literature. Let $(\mu_n,\varphi_n,\psi_n)$ denote the singular system of $K$. The following result, adapted from \citet{miao2018identifying}, shows that completeness implies the existence of $\varphi$.

\begin{proposition}[Existence]\label{prop:exist}
Suppose regularity assumptions hold:

i. $\mathbb{E}[Y^2_{i1}|\eta_i] < \infty$. ii. $\iint f(y_{ij}|\eta_i)f(\eta_i|y_{ij})dy_{ij}d\eta_i<\infty$ iii. $\sum_{n=1}^\infty \frac{1}{\mu_n^2}|<r,\psi_n>|^2 < \infty$

 Under assumption \ref{ass:complete}, there exists $\varphi_j$ such that $\mathbb{E}[\varphi_j(Y_{ij})|\eta_i]=\mathbb{E}[Y_{i1}|\eta_i]$.
\end{proposition}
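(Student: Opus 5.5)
The plan is to invoke Picard's theorem for compact operators between Hilbert spaces, as in \citet{miao2018identifying}. View $K: H_1 = L^2(F(Y_{ij})) \to H_2 = L^2(F(\eta_i))$, $K\varphi = \mathbb{E}[\varphi(Y_{ij})\mid \eta_i]$. Its adjoint is $K^*: H_2 \to H_1$, $K^* g = \mathbb{E}[g(\eta_i)\mid Y_{ij}]$, which follows from $\langle K\varphi, g\rangle = \mathbb{E}[\varphi(Y_{ij}) g(\eta_i)] = \langle \varphi, K^* g\rangle$ by iterated expectations.

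\textbf{Step 1: $K$ is compact and the equation is well posed in $H_2$.}
\begin{itemize}
\item Condition (ii) says that the kernel of $K$ relative to the product of the marginal measures, $f(y,\eta)/(f(y)f(\eta))$, is square integrable in $L^2(F(Y_{ij})\times F(\eta_i))$. Indeed,
\[
\iint \left(\frac{f(y,\eta)}{f(y)f(\eta)}\right)^2 f(y)f(\eta)\,dy\,d\eta = \iint f(y\mid\eta)f(\eta\mid y)\,dy\,d\eta < \infty .
\]
\item Hence $K$ is Hilbert--Schmidt, and therefore compact.
\item $K$ then admits a singular system $(\mu_n,\varphi_n,\psi_n)$ with $K\varphi_n = \mu_n \psi_n$, $K^*\psi_n = \mu_n\varphi_n$ and $\mu_n > 0$.
\item By conditional Jensen, condition (i) gives $r(\eta_i) = \mathbb{E}[Y_{i1}\mid\eta_i] \in H_2$, since $\mathbb{E}[r^2] \le \mathbb{E}[Y_{i1}^2] < \infty$. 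I read (i) as ensuring this finite second moment.
\end{itemize}

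\textbf{Step 2: Completeness gives $r \in \mathcal{N}(K^*)^\perp$.}
\begin{itemize}
\item Assumption \ref{ass:complete} says exactly that $K^* g = 0$ implies $g = 0$, so $\mathcal{N}(K^*) = \{0\}$.
\item Hence $\mathcal{N}(K^*)^\perp = H_2$, which contains $r$.
\item Equivalently, $\{\psi_n\}$ is a complete orthonormal basis of $\overline{\mathcal{R}(K)} = \mathcal{N}(K^*)^\perp = H_2$.
\end{itemize}

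\textbf{Step 3: Picard's theorem.} For a compact operator, $K\varphi = r$ has a solution if and only if $r \in \mathcal{N}(K^*)^\perp$ and $\sum_n \mu_n^{-2}|\langle r,\psi_n\rangle|^2 < \infty$. The first part is Step 2 and the second is condition (iii). To make this explicit, define
\[
\varphi_j = \sum_n \mu_n^{-1}\langle r,\psi_n\rangle \varphi_n .
\]
\begin{itemize}
\item By (iii) and orthonormality of $\{\varphi_n\}$, the series converges in $H_1$.
\item By continuity of $K$, $K\varphi_j = \sum_n \langle r,\psi_n\rangle \psi_n$.
\item This sum equals $r$ because $\{\psi_n\}$ spans $H_2$ (Step 2).
\end{itemize}
This yields $\mathbb{E}[\varphi_j(Y_{ij})\mid\eta_i] = \mathbb{E}[Y_{i1}\mid\eta_i]$ almost surely.

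\textbf{Main obstacle.} The delicate step is Step 1: making the Hilbert--Schmidt argument rigorous. This requires absolute continuity of the joint law with respect to the product of the marginals, so that the densities in (ii) exist, and the same $L^2$ spaces for the operator and its adjoint. I would state this density assumption explicitly rather than rely on the informal notation in (ii). The rest is direct bookkeeping with the singular value decomposition.
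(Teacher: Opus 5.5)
Your proposal is correct and follows essentially the same route as the paper: Picard's theorem, with compactness from the Hilbert--Schmidt kernel condition (ii), $r\in H_2$ from (i), completeness giving $\mathcal{N}(K^*)=\{0\}$, and (iii) supplying the Picard series condition. Two of your additions tighten steps the paper states loosely: you read (i) as a finite unconditional second moment, so that conditional Jensen yields $r\in H_2$, and you construct $\varphi_j$ explicitly from the singular system.
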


\begin{proof}
    All proofs are in the Appendix \ref{si:proof}.
\end{proof}

Completeness has important implications for the design of measurements. For this reason, we do not recommend using measurements that lose information (such as discretized or categorical outcomes). Instead, we encourage researchers to design measures that plausibly exhibit a monotonic relationship with the underlying latent variable of interest, as this provides a simple way to satisfy the completeness assumption.

\begin{example}[Linear Specification]\label{exp:linear1}
\citet{fu2025causal} examine a linear measurement model. The $j^{th}$ outcome measure for unit $i$ is assumed to have a linear relationship with the latent outcome $\eta_i$: $Y_{ij}= \lambda_j \eta_i + \epsilon_{ij}$, where $\lambda_j\neq0$ is a constant and $\mathbb{E}[\epsilon_{ij}]=0$. They set $\lambda_1=1$ as the benchmark measurement. The measurement bridge function exists and can be easily identified under this parametric specification. It takes the form $\varphi_j(Y_{ij})=\frac{1}{\lambda_j} Y_{ij}$. We can verify this directly: $\mathbb{E}[\varphi_j(Y_{ij})|\eta_i]=\mathbb{E}[\eta_i+\frac{1}{\lambda_j}\epsilon_{ij}|\eta_i]=\eta_i=\mathbb{E}[Y_{i1}|\eta_i]$.
\end{example}

\subsection{Identification}

Given that the measurement bridge function exists, we must ensure that it is identified. As in the proximal causal inference literature \citep{miao2024confounding}, identifying $\varphi_j$ requires auxiliary information. In general, we need additional variables that serve as instrumental variables; then $\varphi_j$ can be identified through a standard NPIV approach.

\begin{proposition}[Identification]\label{prop:identify}
Suppose measurement bridge function $\varphi$ exists. For any variable $W_i$, if following two conditions hold,

(1) (Mean Independence) $\mathbb{E}[Y_{i1}|\eta_i]=\mathbb{E}[Y_{i1}|W_i,\eta_i]$ and $\mathbb{E}[\varphi(Y_{ij})|\eta_i]=\mathbb{E}[\varphi(Y_{ij})|W_i,\eta_i]$;

(2) (Completeness) For any square-integrable function $g$ and $j\in \{1,...J\}$, $\mathbb{E}[g(Y_{ij})|W_i]=0$ for almost all $W_i$ implies $g(Y_{ij})=0$ almost surely 

Then, $\varphi$ is uniquely identified through NPIV $\mathbb{E}[\varphi_j(Y_{ij})|W_i]=\mathbb{E}[Y_{i1}|W_i].
$
\end{proposition}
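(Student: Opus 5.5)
The proof has two parts. First we show that the bridge function $\varphi$ from Proposition \ref{prop:exist} solves the NPIV equation. Then we use condition (2) to show that no other square-integrable function does.

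\emph{Step 1: $\varphi$ satisfies the moment equation.} Start from the defining relation \eqref{equ:meabridge}, namely $\mathbb{E}[\varphi_j(Y_{ij})\mid\eta_i]=\mathbb{E}[Y_{i1}\mid\eta_i]$. Take conditional expectations given $W_i$ and apply the tower property through $(W_i,\eta_i)$:
\begin{equation*}
\mathbb{E}[\varphi_j(Y_{ij})\mid W_i]=\mathbb{E}\bigl[\mathbb{E}[\varphi_j(Y_{ij})\mid W_i,\eta_i]\bigm| W_i\bigr]=\mathbb{E}\bigl[\mathbb{E}[\varphi_j(Y_{ij})\mid \eta_i]\bigm| W_i\bigr].
\end{equation*}
The second equality uses the mean-independence condition (1). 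Replace the inner term using the bridge relation to get $\mathbb{E}[\mathbb{E}[Y_{i1}\mid\eta_i]\mid W_i]$. Applying condition (1) again for $Y_{i1}$ rewrites this as $\mathbb{E}[\mathbb{E}[Y_{i1}\mid W_i,\eta_i]\mid W_i]$, which equals $\mathbb{E}[Y_{i1}\mid W_i]$. So every bridge function solves the NPIV equation $\mathbb{E}[\varphi_j(Y_{ij})\mid W_i]=\mathbb{E}[Y_{i1}\mid W_i]$. Integrability is not an issue: it follows from $\varphi_j\in H_1$ and regularity condition (i) of Proposition \ref{prop:exist}.

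\emph{Step 2: uniqueness.} Suppose $\tilde\varphi_j\in L^2(F(Y_{ij}))$ also solves the NPIV equation. Subtracting the two equations gives $\mathbb{E}[g(Y_{ij})\mid W_i]=0$ for almost all $W_i$, where $g=\varphi_j-\tilde\varphi_j$. Since $g$ is square-integrable, completeness condition (2) gives $g(Y_{ij})=0$ almost surely. Hence $\varphi_j$ is the unique solution in $L^2$, and it is determined by the observable joint law of $(Y_{ij},Y_{i1},W_i)$. This is what identification means here.

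\emph{Main subtlety.} The delicate point is the class over which uniqueness is claimed. Completeness only delivers uniqueness among square-integrable solutions, so identification should be stated as "unique in $L^2(F(Y_{ij}))$." We also need to check that the candidate $\tilde\varphi_j$ does not itself have to satisfy condition (1). It does not: the Step 2 argument uses only the observable NPIV equation, never the latent structure. So any $L^2$ solution of the observable equation coincides with the true bridge function, whether or not it was constructed as a bridge. Finally, it is worth remarking that natural choices of $W_i$ satisfy condition (1). Examples are $Z_i$, pre-treatment covariates, or another measurement $Y_{ik}$, $k\neq 1,j$, whose errors are independent of those of $Y_{i1}$ and $Y_{ij}$ given $\eta_i$. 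This explains why the experimental setting supplies instruments internally.
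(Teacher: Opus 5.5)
Your proposal is correct and follows the paper's proof. Step 1 is the same tower-property argument through $(W_i,\eta_i)$ using the two mean-independence conditions, and Step 2 writes out the subtraction-plus-completeness argument that the paper delegates to Newey and Powell (2003); one small wording fix is that Step 1 shows the NPIV equation for the bridge function satisfying condition (1), not for every bridge function.
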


From the proposition, we note that candidate instrumental variables should first satisfy mean independence conditional on the latent outcome. One sufficient (though not necessary) condition is that the instrumental variables are independent of the measurement errors in both $Y_{i1}$ and $Y_{ij}$. The second requirement, completeness, is the nonparametric analogue of the order condition in the parametric setting \citep{darolles2011nonparametric,horowitz2011applied,newey2003instrumental}. If $W_i$ and $Y_{ij}$ are discrete with finite support, then the support of $W_i$ must be at least as rich as that of $Y_{ij}$. In our context, candidate instrumental variables include treatment assignment $Z_i$, measurements other than $Y_{i1}$ and $Y_{ij}$, and pre-treatment covariates $X_i$.

\begin{example}[IV in the linear case.]
Recall that in example \ref{exp:linear1}, the measurement bridge function for the linear specification is $\varphi_j(Y_{ij})=\frac{1}{\lambda_j} Y_{ij}$. Identifying $\varphi$ is therefore equivalent to identifying $\lambda_j$. Substituting $\eta_i$ into the measurement equation yields $Y_{ij}=\lambda_j Y_{i1} + (\epsilon_{ij}-\lambda_j \epsilon_{i1})$. Because $Y_{i1}$ is correlated with the error term through $\epsilon_{i1}$, it is endogenous. The treatment assignment $Z_i$ can serve as an instrumental variable: $Cov(Y_{ij},Z_i)=\lambda_j Cov(Y_{i1},Z_i)$. In this case, $\lambda_j$ can be estimated using the standard Wald estimator. \citet{fu2025causal} also show that $Y_{ik}$, for $k\neq 1 \text{ and } j$, can serve as a valid instrumental variable under appropriate conditions.
\end{example}

Once $\varphi_j$ is identified, we obtain transformed measurements $\tilde{Y}_j = \varphi_j(Y_{ij})$. In this case, we have multiple outcome variables, so that the ALTE is over-identified. There are several ways to utilize these multiple measurements. For example, we can combine the $J$ transformed measurements to form a single “observed” outcome variable $\tilde{Y}_i=\sum_{j=1}^J \omega_j \tilde{Y}_j$, where the weights satisfy $\sum_{j=1}^J \omega_j=1$. To achieve minimal variance, we propose using inverse-variance weights, $\omega_j=\frac{\Sigma^{-1}I_J}{I^{'}_J\Sigma^{-1}I_J}$, where $\Sigma$ denotes the variance-covariance matrix of the transformed outcome. With the optimally weighted outcome $\tilde{Y}_i$, we can then conduct standard causal inference. The average LTE is identified as $\mathbb{E}[\tilde{Y}_i \mid Z = 1] - \mathbb{E}[\tilde{Y}_i \mid Z = 0]$. Alternatively, a more convenient approach is to use optimal GMM to estimate the ALTE.

\subsection{Estimation Under Weak Identification}

To estimate our target parameter, the ALTE, we first need to estimate the nonparametric measurement bridge functions. This is an NPIV problem (i.e., a conditional moment restriction with endogeneity). Once these bridge functions are estimated, we can estimate the ALTE through over-identified unconditional moment restrictions. In practice, the nuisance function defined by NPIV may be weakly identified. \citet{bennett2025inference} refer to this issue as follows: "That is, the conditional moment restrictions
can be severely ill-posed as well as admit multiple solutions (if our Proposition \ref{prop:identify} fails), so that the nuisance functions are not
uniquely identified and are not stable as underlying distributions vary." An important observation is that our primary target parameter is not the measurement bridge function itself; rather, our interest lies in the ALTE, which is a linear functional of the nuisance bridge function. Even if the bridge function is weakly identified, the target parameter can still be strongly identified, allowing for asymptotically normal estimation at the $\sqrt{n}$ rate \citep{severini2012efficiency,santos2011instrumental}.

We therefore adopt the estimation framework proposed by \citet{bennett2025inference}. They develop methods for estimation and inference of continuous linear functionals of nuisance functions defined by conditional moment restrictions, particularly in settings with an NPIV first stage. We sketch the main idea here and provide full details in the SI \ref{si:bennett}.

Define $s(Z_i)=\frac{Z_i}{\pi}-\frac{1-Z_i}{1-\pi}$, where $\pi = \Pr(Z_i = 1)$, as the Horvitz–Thompson transform. Then, our target parameter, the ALTE, can be expressed as a continuous linear functional of the measurement bridge functions: 
\begin{equation}
\tau = \E\!\left[s(Z_i) Y_{i1}\right]
= \E\!\left[s(Z_i)\phi_{j}(Y_{ij})\right], \qquad j=2,\ldots,J,
\label{eq:theta_target}
\end{equation}

Following \citet{bennett2025inference}, we assume that the ALTE is strongly identified in the sense that the following assumption holds.
\begin{assumption}
For each $j=1,...,J$
$\Xi_0 \neq \emptyset$, where
 $$ 
\Xi_0 =argmin_{\xi_j \in \Phi_j} \frac{1}{2}\mathbb{E}[\mathbb{E}[\xi_j(Y_j)|W]^2]-\mathbb{E}[s(Z_i)\xi_j(Y_j)]
    $$
\end{assumption} 

This assumption essentially impose on the smoothness of ALTE $\tau=\mathbb{E}[s(Z_i) \phi_j(Y_j)]$ with respect to the NPIV conditional expectation operator $K$. Similar to the DML literature \citep{chernozhukov2018double}, the key step is to construct a Neyman orthogonal score for the ALTE. The Neyman orthogonal score is constructed by introducing a debiasing nuisance function $q^*_j(W)=\mathbb{E}[\xi_0(Y_j)\mid W]$.
Then, the score for the ALTE is given by
$
\psi_j=s(Z_i) \phi_j(Y_j)+q_j(W)(Y_1-\phi_j(Y_j))-\tau
$. 
There are three parameters in the score that must be estimated. To avoid overfitting bias, all nuisance functions are estimated via cross-fitting. Let the sample be split into $K$ folds $I_1,\ldots,I_K$. For each fold $k$, we estimate the bridge function using a penalized minimax criterion:
$$
\hat\phi^{(-k)}_j
=
\arg\min_{\phi\in\Phi_n}
\sup_{q\in\mathcal{Q}_n}
\mathbb{E}_{n,-k}
\Bigl[
\bigl(\phi(Y_j)-Y_1\bigr)q(W)
-\frac{1}{2}q(W)^2
+
\mu_n \phi_j(Y_j)^2\Bigr]-\gamma^q_n||q||^2_Q+\gamma^\phi_n ||\phi_j||^2_\Phi
$$ where $\mathbb{E}_{n,-k}$ denotes the empirical average over the observations in
$\mathcal{I}^c_k$, $\|\cdot\|_\Phi$ and $|q||^2_Q$ are penalty norms, and $\mu_{n},\gamma^{q}_n,\gamma^{\phi}_n$ are regularization hyperparameters. $\hat{\xi}^{-k}_j$ and $\hat{q}^{-k}_j$ are estimated similarly. 

Once the nuisance parameters are estimated, we estimate the ALTE $\tau$ using GMM, as we have $J$ moment conditions. Define the cross-fitted moments
\[
\hat m_{ij}
=
\begin{cases}
s(Z_i)Y_{i1}, & j=1,\\
s(Z_i)\hat\phi_j^{(-k(i))}(Y_{ij})
+
\hat q_j^{(-k(i))}(W_i)\Bigl(Y_{i1}-\hat\phi_j^{(-k(i))}(Y_{ij})\Bigr), & j=2,\ldots,J,
\end{cases}
\]
where $k(i)$ denotes the fold containing observation $i$. Let
$
\bar{\hat m}_n
=
\frac{1}{n}\sum_{i=1}^n(
\hat m_{i1}, \dots, \hat m_{iJ})^T$. Then, the overidentified GMM estimator is
\[
\hat\tau_{\mathrm{GMM}}
=
\argmin_{\tau\in\Theta}
\bigl(\bar{\hat m}_n-\tau\one_J\bigr)^\top
\hat\Omega_n^{-1}
\bigl(\bar{\hat m}_n-\tau\one_J\bigr),
\]
where $\hat\Omega_n$ is a consistent estimator of the weighting matrix. The optimal weighting matrix is $\Omega=Var(\psi_i)$, where $\psi_i=(\psi_{i1},...,\psi_{iJ})$ is the stacked score vector. 

\citet{bennett2025inference} show that, for each $j$, under regularity assumptions, first-stage estimation does not affect the second stage, 
$$
\sqrt{n}\left(\frac{1}{n}\sum_{i=1}^n \hat m_{ij} - \tau_0\right)
=
\frac{1}{\sqrt{n}}\sum_{i=1}^n \psi_{ij} + o_p(1), 
\qquad j=2,\ldots,J.
$$ It follows that, for any fixed $J<\infty$, $\sqrt{n}(\bar{\hat m}_n-\tau\one_J) \rightarrow_d N(0,\Omega)$, and $\hat{\tau}_{\mathrm{GMM}}$ is also asymptotically normal. We provide detailed estimation procedures with control variables in the SI \ref{si:bennett}. All implementation details are available in the accompanying R package.

\subsection{Simulation}

We simulate two studies in which both studies share the same latent untreated outcome and the same latent treatment effect, so the true average latent treatment effect is common across studies. In each replication, we draw a scalar covariate $x_i$, set $\eta_{0i}=0.9x_i+0.3x_i^2+u_i$, and let the treatment effect be $\tau_i=0.8+0.3x_i$, so that $\eta_{1i}=\eta_{0i}+\tau_i$. The target parameter is therefore the common ALTE, which is approximately $0.80$ in the simulation. Study~1 observes three measurements $(Y_1,Y_2,Y_3)$, where $Y_1$ is the common benchmark and $Y_2,Y_3$ are noisy nonlinear functions of the latent outcome (to be precise, it is a polynomial function with quadratic and cubic terms). Study~2 keeps the same benchmark $Y_1$ but replaces the other two measurements with $(Y_2',Y_3')$, which are differently coded nonlinear transformations of the same latent variable. Thus the latent causal effect is identical across studies, but the measurement system differs. Because the outcome is continuous (IRT model is not applicable), we compare PCA, ICW, WSI, and our NSI estimator.

\begin{table}[!h]
    \centering
    \begin{tabular}{lcc}
         \hline
         \hline
    Estimator & Mean study gap & Rate of Rejection of Equal ALTE \\
\hline
PCA & 0.256 & 0.247 \\
ICW & 0.366 & 1.000 \\
WSI &  0.072 & 0.012\\
NSI & 0.004 & 0.006 \\
\hline
\hline
    \end{tabular}
    \caption{Simulation Comparison Among Different Methods.}
    \label{tab:rej}
\end{table}

In $1000$ Monte Carlo simulations with sample size $n=800$ per study, we compute the difference in ALTE across two studies and conduct a Wald test of equality between the two estimates. The results are reported in Table \ref{tab:rej}. The average cross-study gap is $0.256$ for PCA and $0.366$ for ICW. PCA rejects the true null hypothesis of equal ALTE in $24.7\%$ of replications, while ICW rejects it in $100\%$ of replications. As expected, because PCA and ICW are measurement-dependent methods, the resulting estimates are also measurement-dependent and therefore not comparable, even when the latent outcome and ALTE are identical across studies. 

WSI, in principle, can avoid the noncomparability problem, but it imposes a linearity assumption on the bridge function. The average cross-study gap decreases to $0.072$, and the rejection rate for equal ALTE is only $1.2\%$. This represents a substantial improvement over PCA and ICW. NSI relaxes the parametric assumption on the bridge function. It achieves the lowest cross-study gap and the lowest rejection rate among all methods.

It is also worth noting that nonparametric estimators are typically unstable, requires larger sample size, and rely heavily on assumptions about the underlying function space governing the relationship between measurements and the latent outcome. In practice, researchers rarely employ measurements with highly nonlinear relationships to the latent outcome. Therefore, in practice, it is often advisable to begin with linear methods, WSI, as advocated by \citet{fu2025causal}.

\section{Application}

To illustrate the proposed estimator, we revisit the field experiment studied in \citet{kalla2020reducing}, which examined whether door-to-door canvassing changes attitudes toward undocumented immigrants. We use this application because it contains multiple outcome measures that are intended to capture a common latent construct and because the design includes rich experimental variation that is useful for identification. Specifically, the experiment includes two treatment indicators---a full treatment and an abbreviated treatment---and multiple post-treatment outcome measures concerning immigration attitudes and policy views. \citet{fu2025causal} use this application to illustrate the WSI estimator under a linear measurement model. Here we use the same application to illustrate our more general nonparametric bridge-function approach. 

\begin{figure}[!h]
    \centering
\includegraphics[width=0.8\linewidth]{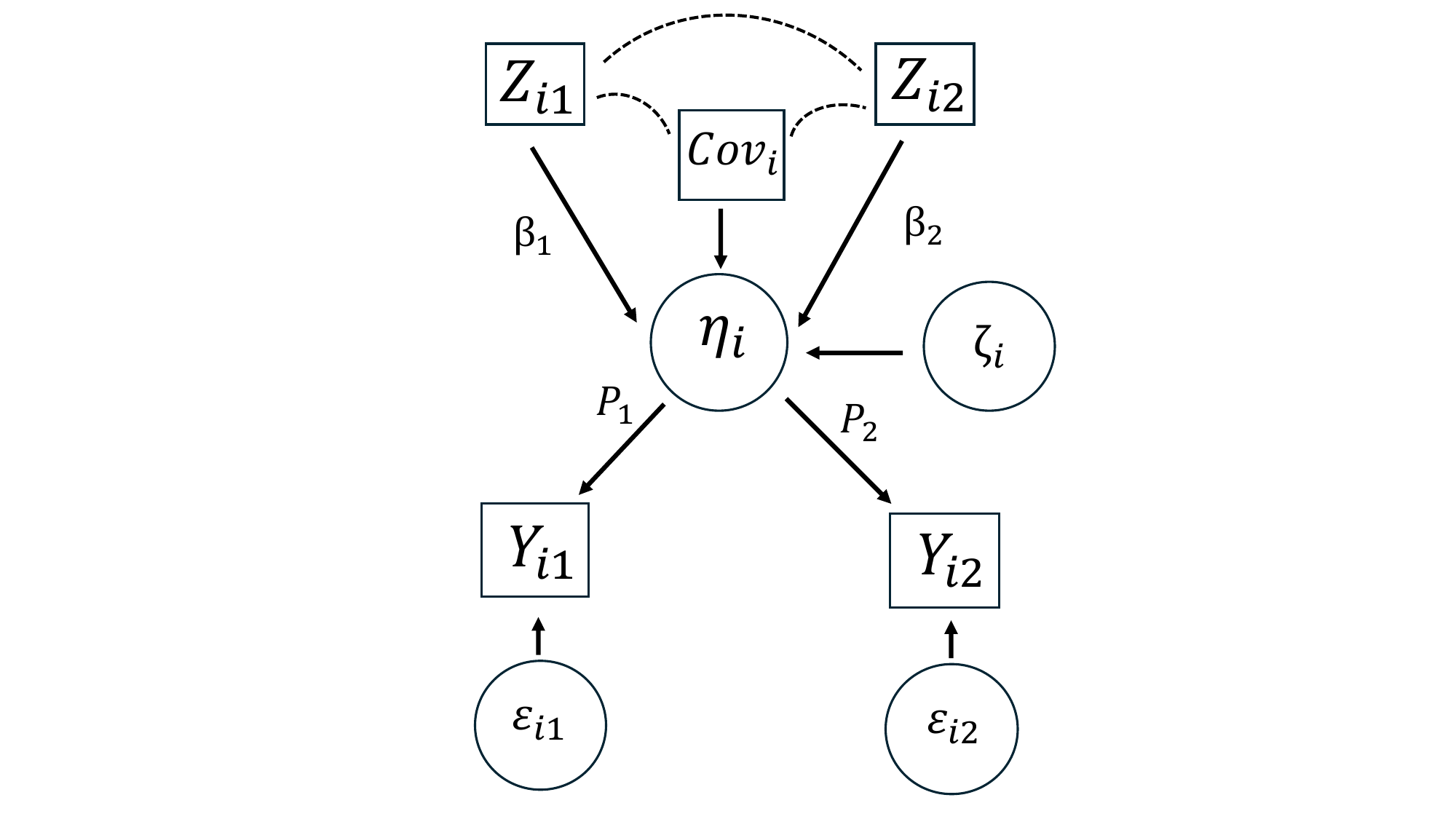}
    \caption{Graphical Depiction of \citet{kalla2020reducing}.}
    \label{fig:dgpsur1}
\end{figure}

The empirical setting is especially useful for our purposes for three reasons. First, as shown in the figure \ref{fig:dgpsur1}, the study contains two distinct post-treatment outcome scales, one summarizing respondents' attitudes toward undocumented immigrants ($Y_1$) and the other summarizing their views on immigration-related public policy ($Y_2)$. These two scales are designed to measure the same underlying latent disposition, but they need not be linearly related to one another. Second, the experiment includes two randomized treatment indicators, which provide valid sources of exogenous variation for identifying bridge functions. Third, the data also include a pre-treatment covariate, constructed from baseline attitudes. This variable is predictive of the latent outcome and can be incorporated both to improve precision and to enrich the instrument set used in the first stage. More details about the indices can be found in SI \ref{si:app}.

Let $Y_1$ denote the benchmark outcome and $Y_2$ denote the auxiliary measurement. We estimate the bridge function mapping $Y_2$ into the scale of $Y_1$ using three alternative basis constructions: a polynomial series basis, a random-forest basis, and an RKHS basis. For each basis, we then form the debiased Bennett estimator of the treatment effects on the latent outcome. To facilitate comparison with the earlier linear approach, we also report the WSI estimator. They show that the measurements follow a linear pattern, and therefore, we may expect the estimates under nonparametric and linear assumption to be similar.

\begin{table}[!htbp]
\centering
\caption{NSI and WSI estimates for the Kalla--Broockman application}
\label{tab:application-bennett}
\begin{threeparttable}
\begin{tabular}{lcccc}
\toprule
\toprule
 & Series & RF & RKHS & WSI \\
\midrule
\multicolumn{5}{l}{\textit{Panel A: $\sim$ treat(full) + treat(mod)}} \\
\\
treat(full) & 0.4440* & 0.4221* & 0.4177* & 0.3840* \\
 & (0.2371) & (0.2371) & (0.2373) & (0.2150) \\
treat(mod) & 0.2765 & 0.2585 & 0.2497 & 0.2240 \\
 & (0.2413) & (0.2413) & (0.2414) & (0.2240) \\
\midrule
\multicolumn{5}{l}{\textit{Panel B:  $\sim$ treat(full) + treat(mod) + baseline covariate}} \\
\\
treat(full) & 0.3979** & 0.3780* & 0.4245** & 0.4320*** \\
 & (0.1933) & (0.1933) & (0.1934) & (0.0970) \\
treat(mod) & 0.0771 & 0.0737 & 0.0853 & 0.0900 \\
 & (0.1943) & (0.1943) & (0.1943) & (0.1020) \\
baseline covariate & 0.6270*** & 0.5918*** & 0.6506*** & 0.6620*** \\
 & (0.0183) & (0.0166) & (0.0189) & (0.0110) \\
 \bottomrule
  \bottomrule
\end{tabular}
\begin{tablenotes}
\footnotesize
\item Notes: Entries are coefficient estimates with standard errors in parentheses. $^{*}p<0.10$, $^{**}p<0.05$, and $^{***}p<0.01$. The first three columns report NSI estimators using series, random-forest, and RKHS bases, respectively, to estimate the bridge component. The instrument set includes the two treatment indicators and \texttt{baseline covariate}. The last column reports the WSI estimator.
\end{tablenotes}
\end{threeparttable}
\end{table}

Table~\ref{tab:application-bennett} reports the results. We use the two treatment assignment and baseline covariate as IVs in the first stage to estimate the bridge function. Panel A presents specifications in which the latent outcome is regressed on the two treatment indicators only. Panel B adds the baseline covariate to the second-stage regression. Across all three nonparametric implementations, the estimated effect of the full treatment is positive and statistically distinguishable from zero at conventional levels in most specifications. The point estimates are also quite stable across basis choices, ranging from $0.418$ to $0.444$ without covariate adjustment and from $0.398$ to $0.425$ with covariate adjustment. The abbreviated treatment has a smaller estimated effect and is not statistically distinguishable from zero in either panel. These findings mirror the substantive conclusion in \citet{fu2025causal}: the full canvassing intervention produces a meaningful shift in the latent immigration-attitude outcome, whereas the abbreviated treatment does not.

The comparison with WSI is also informative. In Panel A, the nonparametric estimates of the full-treatment effect are somewhat larger than the WSI estimate, though all are similar in magnitude. In Panel B, after adjusting for the baseline covariate, the nonparametric and WSI estimates remain close, but the WSI estimator is considerably more precise. This pattern is sensible. The WSI estimator imposes a linear measurement structure and is therefore more efficient when that approximation is adequate, whereas the Bennett estimator relaxes linearity and protects against measurement noncomparability at the cost of additional variance. In this application, the close correspondence between the two sets of point estimates suggests that the linear approximation used by WSI is not badly misspecified, while the nonparametric estimates show that the substantive conclusion does not depend on imposing linear bridge functions.

Overall, this application illustrates the practical value of the proposed approach. The nonparametric estimator allows researchers to carry over the basic design logic of linear model (e.g. WSI estimator) to settings where the relationship between outcome measures may be nonlinear and unknown. At the same time, the empirical results show that, in this application, the more flexible estimator delivers conclusions that are substantively consistent with the linear WSI benchmark. This combination of robustness and interpretability is especially valuable in applications where multiple imperfect measurements are available but their functional relationship to the latent outcome is uncertain.


\section{Conclusion}

Causal inference in experiments is often framed as a problem of treatment assignment, identification, and estimation, taking the outcome as given. This paper argues that when the outcome of interest is latent, measurement is itself part of the causal inference problem. Because latent outcomes have no intrinsic metric, treatment effects on them are only meaningful relative to a measurement system. Once this point is recognized, two distinct comparability problems arise. Across studies, different sets of indicators may lead standard methods to target different empirical quantities even when the underlying latent treatment effect is the same. Within a study, different indicators may have different and possibly nonlinear relationships with the same latent outcome, so they are not directly commensurable. 

We develop a general nonparametric framework that addresses both problems. Our approach uses a benchmark measurement and measurement bridge functions to align indicators to a common scale in expectation. This produces transformed outcomes that are comparable within studies and, when a common benchmark is shared, comparable across studies as well. The framework is design-based, allows the relationship between measurements and the latent outcome to remain unknown and nonlinear, and connects identification to a nonparametric instrumental variables problem that can be solved using treatment assignment, covariates, and additional measurements as instruments. It therefore shifts attention from ad hoc dimension reduction toward identification and estimation of the average latent treatment effect itself. 

More broadly, the paper suggests that measurement design should be treated as part of experimental design. Researchers who wish to compare results across studies should, when possible, include at least one common benchmark measurement. Researchers who collect multiple indicators within a study should consider whether those indicators preserve enough information about the latent outcome to support bridge-function identification. 

Several limitations and extensions remain for future work. The nonparametric approach relies on completeness-type conditions and can be less stable in finite samples than simpler linear procedures. Future research could investigate weaker identification strategies, improved regularization and efficiency in finite samples, and extensions to settings with richer treatment regimes, longitudinal measurements, or interference. Even with these limitations, the central message is clear: when outcomes are latent, causal inference cannot proceed as if measurement were secondary. Measurement choices shape the empirical meaning of the estimand, the comparability of results, and the credibility of substantive conclusions. By placing measurement at the center of design-based causal inference, this paper aims to provide a foundation for more interpretable, comparable, and robust experimental research on latent outcomes.

\newpage

\begin{spacing}{0.0}
	\setlength{\bibsep}{10pt}
	\setstretch{0.1}
	\bibliographystyle{apalike2}
	\bibliography{literature}
\end{spacing}

\newpage


\clearpage

\appendix
\addcontentsline{toc}{section}{Appendix} 
\part{Supplementary Information} 
\parttoc 

\setcounter{figure}{0}
\setcounter{table}{0}
\setcounter{proposition}{0} 
\renewcommand\thefigure{A.\arabic{figure}}
\renewcommand\thetable{A.\arabic{table}}
\renewcommand\theproposition{\thesection.\arabic{proposition}}

\onehalfspacing
\setcounter{page}{1}

\input{appendix}

\end{document}

%% file: appendix.tex
\section{List of Current Practice in Social Science}\label{si:summarytab}
\begin{table}[htbp]\centering
\caption{Examples of Multi-outcome measurement with summary index in APSR, AJPS, JOP}
\begin{tabular}{p{6.5cm} c c p{6.8cm}}
\hline
\textbf{Citation} & \textbf{Year} & \textbf{Journal} & \textbf{Short note on outcomes} \\
\hline
\citet{sichart2025_countering} & 2025 & APSR & Average/standardized indices across multiple knowledge/attitude items for misinformation outcomes. \\
\citet{bowles2025_sustaining} & 2025 & APSR & Composite indices of misinformation discernment and media-use; averaged/standardized multiple outcomes. \\
\citet{anderson2014_admin_units} & 2014 & APSR & Aggregated service-provision indicators into an index to summarize multi-dimensional outcomes. \\
\citet{andrabi2017_info_dissemination} & 2017 & APSR & Constructed summary (standardized) outcome indices across education-service measures. \\
\citet{gottlieb2019_competition_publicgoods} & 2019 & APSR & Combined multiple governance outcomes into a single standardized index for efficiency. \\
\citet{cruz2017_network_structures} & 2017 & APSR & Averaged/standardized several public-goods outcomes to capture overall provision. \\
\citet{baccini2023_electoral_volatility} & 2023 & APSR & Multi-item composite outcome index summarizing political/economic indicators. \\
\citet{malis_smith2021_statevisits} & 2021 & AJPS & Pooled/averaged measures into indexes for robustness across multiple outcomes. \\
\citet{boggild2024_behave_badly} & 2024 & AJPS & Anderson-style standardized index combining trust, satisfaction, and compliance items. \\
\citet{davis_hitt2025_scotus_legitimacy} & 2025 & AJPS & Composite legitimacy measures averaged across multiple survey items. \\
\citet{mazeikaite_motta2025_grids} & 2025 & AJPS & Summary participation index built from several turnout/engagement outcomes. \\
\citet{carreri2021_good_politicians} & 2021 & JOP & Inverse-covariance-weighted (Anderson 2008) index of service quality (multiple outcomes). \\
\citet{curiel2023_civic_inclusion} & 2023 & JOP & ICW/standardized index aggregating diverse attitudes/behavioral outcomes. \\
\citet{aid_refugees_uganda2025} & 2025 & JOP & Summary index of attitudes toward refugees combining several survey items. \\
\citet{grossman2017_fragmentation} & 2017 & JOP & Anderson-style summary index of health/public-service outcomes (multiple indicators). \\
\hline
\end{tabular}
\end{table}

\begin{table}[htbp]
\caption{Examples of multi-outcome measurement with PCA in APSR, AJPS, JOP}\centering
\begin{tabular}{p{6.4cm} c c p{6.8cm}}
\hline
\textbf{Citation} & \textbf{Year} & \textbf{Journal} & \textbf{Short note on PCA} \\
\hline
\citet{tavits2024fathers} & 2024 & APSR & PCA to aggregate multiple attitude items into sexism/egalitarianism indices. \\
\citet{carlos2021mundane} & 2021 & APSR & PCA to build composite measures of “adult responsibilities” / engagement-related indices across items. \\
\citet{barker2022hubris} & 2022 & APSR & PCA to reduce multi-item batteries into intellectualism/anti-intellectualism/epistemic-hubris scales. \\
\citet{barnes2022spending} & 2022 & AJPS & PCA-based index from multivariate tax/spending items to summarize budget preference dimensions. \\
\citet{bove2024military} & 2024 & AJPS & PCA to combine trust/obedience/discipline items into a military-culture score. \\
\citet{grewal2024discrimination} & 2024 & AJPS & PCA to summarize anti-system attitudes (e.g., support for violence, system trust) into a single factor. \\
\citet{stone2010valence} & 2010 & AJPS & PCA to construct candidate valence/quality indices from multiple indicators. \\
\citet{campbell2009civic} & 2009 & AJPS & PCA to create composite civic-engagement measures from several participation items. \\
\citet{coppedge2008dimensions} & 2008 & JOP & PCA to extract democracy’s core dimensions (contestation, inclusiveness) from multiple indicators. \\
\citet{testa2014orientations} & 2014 & JOP & PCA to combine conflict-orientation items (positive/negative) into scales. \\
\citet{pan2018china} & 2018 & JOP & PCA to map latent ideological dimensions in China from numerous policy items. \\
\citet{fortunato2016knowledge} & 2016 & JOP & PCA to summarize partisan left–right knowledge across multiple factual items. \\
\citet{jacobs2021whosenews} & 2021 & APSR & PCA to reduce multiple economic-news tone/coverage measures to core indices. \\
\hline
\end{tabular}
\label{tab:pca-top-journals}
\end{table}

\newpage

\section{Proof}\label{si:proof}

\subsection{Proof of Proposition \ref{prop:exist}}

\begin{proof}
The proof follows \citet{miao2018identifying}.
We mainly apply the Picard's theorem \citep[][Theorem 15.18, Page 311]{kress1989linear}:
\begin{theorem}[Picard]
    Let $K: H_1 \rightarrow H_2$ be a compact linear operator with singular system ($\mu_n,\varphi_n,\psi_n$). The equation of the first kind $Kh=r$ is solvable if and only if 
    
    1. $r$ belongs to the orthogonal complement $\mathcal{N}(K^*)^\perp$; and
    
    2. $\sum_{n=1}^\infty \frac{1}{\mu_n^2}|<r,\psi_n>|^2 < \infty$.
\end{theorem}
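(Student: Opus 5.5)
The statement to prove is the Picard theorem as used for Proposition \ref{prop:exist}. My plan is to work directly with the singular value decomposition of the compact operator $K$.

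\textbf{Setup.} Since $K:H_1\to H_2$ is compact, it has a singular system $(\mu_n,\varphi_n,\psi_n)$ with $\mu_n>0$ and $\mu_n\to 0$. Here $\{\varphi_n\}$ is an orthonormal basis of $\mathcal N(K)^\perp\subset H_1$, and $\{\psi_n\}$ is an orthonormal basis of $\overline{\mathcal R(K)}=\mathcal N(K^*)^\perp\subset H_2$. The defining relations are $K\varphi_n=\mu_n\psi_n$ and $K^*\psi_n=\mu_n\varphi_n$. Everything rests on these two facts: the orthogonal decomposition $H_2=\overline{\mathcal R(K)}\oplus\mathcal N(K^*)$, and the expansion $Kh=\sum_n\mu_n\langle h,\varphi_n\rangle\psi_n$ for every $h\in H_1$.

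\textbf{Necessity.} Suppose $Kh=r$.
\begin{itemize}
\item For any $g\in\mathcal N(K^*)$, we have $\langle r,g\rangle=\langle h,K^*g\rangle=0$. Hence $r\in\mathcal N(K^*)^\perp$, which is condition 1.
\item We also have $\langle r,\psi_n\rangle=\langle h,K^*\psi_n\rangle=\mu_n\langle h,\varphi_n\rangle$.
\item Therefore $\sum_n\mu_n^{-2}|\langle r,\psi_n\rangle|^2=\sum_n|\langle h,\varphi_n\rangle|^2\le\|h\|^2<\infty$ by Bessel's inequality. This is condition 2.
\end{itemize}

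\textbf{Sufficiency.} Assume conditions 1 and 2, and define the candidate
\[
h=\sum_{n=1}^\infty \frac{1}{\mu_n}\langle r,\psi_n\rangle\varphi_n .
\]
\begin{itemize}
\item The coefficients are square-summable by condition 2 and $\{\varphi_n\}$ is orthonormal, so the Riesz--Fischer theorem gives convergence of the series in $H_1$.
\item $K$ is bounded, so it can be applied term by term: $Kh=\sum_n\langle r,\psi_n\rangle\psi_n$.
\item By condition 1, $r$ lies in $\mathcal N(K^*)^\perp=\overline{\mathcal R(K)}$, where $\{\psi_n\}$ is complete. Hence this sum equals $r$, and $Kh=r$.
\end{itemize}

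\textbf{Main obstacle.} The delicate step is justifying that $\{\psi_n\}$ is complete in $\mathcal N(K^*)^\perp$. This follows from the spectral theorem applied to the compact self-adjoint operator $KK^*$: its nonzero eigenvalues are $\mu_n^2$ with eigenvectors $\psi_n$, and these span $\mathcal N(KK^*)^\perp=\mathcal N(K^*)^\perp$. The last equality holds because $\langle KK^*g,g\rangle=\|K^*g\|^2$, so $KK^*g=0$ exactly when $K^*g=0$.

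\textbf{Application to Proposition \ref{prop:exist}.}
\begin{itemize}
\item Regularity condition (ii) makes $K$ Hilbert--Schmidt and hence compact.
\item Condition (i) puts $r\in H_2$.
\item Completeness (Assumption \ref{ass:complete}) says $K^*g=\mathbb E[g(\eta)\mid Y_j]=0$ implies $g=0$. So $\mathcal N(K^*)=\{0\}$, and condition 1 holds trivially.
\item Condition (iii) is exactly condition 2.
\end{itemize}
Picard's theorem then yields the bridge function $\varphi_j$.
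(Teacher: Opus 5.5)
Your argument is correct and matches the paper's approach: the paper does not reprove Picard's theorem but cites Kress (Theorem 15.18), whose proof is exactly your singular-system argument (necessity from $\langle r,\psi_n\rangle=\mu_n\langle h,\varphi_n\rangle$ plus Bessel, sufficiency from the series $\sum_n\mu_n^{-1}\langle r,\psi_n\rangle\varphi_n$ plus completeness of $\{\psi_n\}$ in $\mathcal{N}(K^*)^\perp$), and your application to Proposition~\ref{prop:exist} follows the paper's verification point for point. One caveat that you share with the paper: to get $r\in H_2$ you need condition (i) in integrated form, $\mathbb{E}[Y_{i1}^2]<\infty$, so that Jensen gives $\mathbb{E}[r(\eta_i)^2]\le\mathbb{E}[Y_{i1}^2]$, because a merely pointwise-finite $\mathbb{E}[Y_{i1}^2\mid\eta_i]$ does not suffice.
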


    Define Hilbert spaces $H_1=L^2(F(Y_{ij}))$ and $H_2=L^2(F(\eta_i))$, which denote the space of all square integrable functions with respect to a cumulative distribution function $F$ respectively.

    Let $K$ be an integral operator with kernel $k(y_{ij},\eta_i)=\frac{f(y_{ij},\eta_i)}{f(y_{ij})f(\eta_{i})}$. It turns out that $K$ is the conditional expectation operator \citep{carrasco2007linear}. To see this, $K: H_1 \rightarrow H_2$, $K\varphi=\int_{-\infty}^{\infty} k(y_{ij},\eta_i) \varphi(y_{ij})dF(y_{ij})=\mathbb{E}[\varphi(Y_{ij})|\eta_i]$. Also, under $\iint [\frac{f(y_{ij},\eta_i)}{f(y_{ij})f(\eta_{i})}]^2 f(y_{ij})f(\eta_{i})dy_{ij}d\eta_i<\infty$, which is equivalent to the 
  regularity assumption ii, it is compact \citep[][Example 2.3, page 5659]{carrasco2007linear}.

  We observe that $\mathbb{E}[\varphi_j(Y_{ij})|\eta_i]=\mathbb{E}[Y_{i1}|\eta_i]$ can be represented by $K\varphi=r$, where $r(\eta_i):= \mathbb{E}[Y_{i1}|\eta_i] \in H_2$ by regularity assumption i.

   The adjoint of the conditional expectation operator is $K^*:H_2 \rightarrow H_1$. For $g \in H_2$, $K^* g=\mathbb{E}[g(\eta_i)|Y_{ij}=y_j]$. It can be checked by $<K\varphi,g>=<\varphi,K^*g>$.
    
   Now, we check two conditions in the Picard's theorem. For any $g \in \mathcal{N}(K^*)=\{g:K^*g=0\}$, we have $K^*g=\mathbb{E}[g(\eta_i)|Y_{ij}]=0$ almost surely, hence by completeness assumption \ref{ass:complete}, $g=0$ almost surely. Therefore, $\mathcal{N}(K^*)=\{g=0\}$, which implies $\mathcal{N}(K^*)^{\perp}=H_2$. Because $r(\eta_i)\in H_2$, we conclude $r \in \mathcal{N}(K^*)^\perp$. Regularity assumption iii implies the second condition of the Picard's theorem.
\end{proof}

\subsection{Proof of Proposition \ref{prop:identify}}

\begin{proof}
Recall measurement bridge function is defined as $$\mathbb{E}[\varphi_j(Y_{ij})|\eta_i]=\mathbb{E}[Y_{i1}|\eta_i].$$ Taking expectation on both sides of the bridge equation with respect to $p(\eta_i|W_i)$. The LHS is
$$
\mathbb{E}_{\eta_i|W_i}\mathbb{E}[Y_{i1}|\eta_i]]= \mathbb{E}_{\eta_i|W_{ik}}\mathbb{E}[Y_{i1}|\eta_i,W_i]]=\mathbb{E}[Y_{i1}|W_i]
$$ where the first equation holds if $\mathbb{E}[Y_{i1}|\eta_i]=\mathbb{E}[Y_{i1}|W_i,\eta_i]$.

The RHS follows the same reasoning but needs $\mathbb{E}[\varphi(Y_{ij})|\eta_i]=\mathbb{E}[\varphi(Y_{ij})|W_i,\eta_i]$. Therefore, we obtain $\mathbb{E}[\varphi_j(Y_{ij})|W_i]=\mathbb{E}[Y_{i1}|W_i].
$

Now, let $K : \varphi \mapsto \mathbb{E}[\varphi \mid W_i]$ and define $r(W_i) = \mathbb{E}[Y_{i1} \mid W_i]$. We obtain the well known NPIV equation $K\varphi=r$. The function $\varphi$ uniquely identified if and only if completeness condition hold: for any square-integrable function $g$, $\mathbb{E}[g(Y_{ij})|W_i]=0$ for almost all $W_i$ implies $g(Y_{ij})=0$ almost surely \citep{newey2003instrumental}.
\end{proof}

\newpage

\section{Joint SMD Estimation}

We adapt the estimation approach for sequential moment restrictions introduced by \citet{ai2012semiparametric}. They propose an optimally weighted, orthogonalized sieve minimum distance estimator that achieves the semiparametric efficiency bound. This approach requires the unique identification of the bridge functions under Proposition \ref{prop:identify}. The key idea is to construct sieve moment equations and orthogonalize them so that the resulting moment conditions satisfy Neyman orthogonality.

We start from the moment equations for the bridge functions. We collect all variables and denote them by $O_i = (Y_{i1}, \ldots, Y_{iJ}, Z_i, W_i)$ for each unit $i$. Recall that, for each $j = 2,\ldots,J$, the bridge function satisfies
$\E\!\left[ Y_{i1} - \phi_{j}(Y_{ij}) \mid W_i \right] = 0$. Therefore, we obtain the stacked residual vector for the bridge functions:
\begin{equation}
\rho_2(O_i; \phi)
=
\begin{pmatrix}
Y_{i1} - \phi_2(Y_{i2}) \\
\vdots \\
Y_{i1} - \phi_J(Y_{iJ})
\end{pmatrix},
\label{eq:rho2}
\end{equation}
where $\phi = (\phi_2,\ldots,\phi_J)$. The corresponding conditional moment restriction is $\E\!\left[\rho_2(O_i; \phi)\mid W_i\right] = 0$.

Next, we construct orthogonalized unconditional moments for the ALTE. Define $s(Z_i)=\frac{Z_i}{\pi}-\frac{1-Z_i}{1-\pi}$, where $\pi = \Pr(Z_i = 1)$, as the Horvitz–Thompson transform. Then, our target parameter, the ALTE, can be expressed as a continuous linear functional of the measurement bridge functions: 
\begin{equation}
\tau = \E\!\left[s(Z_i) Y_{i1}\right]
= \E\!\left[s(Z_i)\phi_{j}(Y_{ij})\right], \qquad j=2,\ldots,J,
\label{eq:theta_target}
\end{equation} Because we have multiple measurements, the ALTE is over-identified. The unconditional moments are constructed using both the reference outcome and the transformed outcomes. Let $\rho_{1,\mathrm{ref}}(O_i;\theta) = s(Z_i)Y_{i1} - \theta$, and for each $j=2,\ldots,J$ define $\rho_{1j}(O_i;\theta,\phi_j) = s(Z_i)\phi_j(Y_{ij}) - \theta$. Following \citet{ai2012semiparametric}, we define the orthogonalized residual by removing the linear projection of $\rho_1$ onto $\rho_2$ conditional on $W$:
\[
\varepsilon_1(O;\tau,\phi)
= \rho_1(O;\tau,\phi) - \Gamma(W)\rho_2(O;\phi),
\]
where the population projection matrix is
$
\Gamma(W)
= \E\!\left[\rho_1(O;\tau,\phi)\rho_2(O;\phi)^T \mid W\right]
\Sigma_2(W)^{-1},
$ with
$ \Sigma_2(W) = \E\!\left[\rho_2(O;\phi)\rho_2(O;\phi)^T \mid W\right]$. By construction, $
\E\!\left[\varepsilon_1(O;\tau,\phi)\rho_2(O;\phi)^T \mid W\right] = 0$. The parameter $\tau$ and the measurement bridge functions $\phi_j$ are then estimated through a joint sieve minimum distance objective function. Common choices of sieve bases include B-splines, wavelets, and related basis expansions. The optimally weighted orthogonalized SMD estimator is
\[
(\widehat\tau,\widehat\phi)
=
argmin_{\tau\in\R,\,\phi\in\Phical_{k(n)}}
\Bigg\{
\widehat m_1(\alpha)'\widehat\Sigma_{o1}^{-1}\widehat m_1(\alpha)
+
\frac1n\sum_{i=1}^n
\widehat m_2(W_i,\alpha)'\widehat\Sigma_{o2}(W_i)^{-1}\widehat m_2(W_i,\alpha)
\Bigg\},
\]
where $
\widehat m_1(\alpha) = \frac1n\sum_{i=1}^n \widehat\varepsilon_1(O_i;\tau,\phi)$, $
\widehat\varepsilon_1(O_i;\tau,\phi)=\rho_1(O_i;\tau,\phi)-\widehat\Gamma(W_i)\rho_2(O_i;\phi)$,
and $\widehat m_2(W,\alpha)$ is a consistent nonparametric estimator of $
m_2(W,\alpha)=\E[\rho_2(O;\phi)\mid W]$, where $\Phical_{k(n)}$ is a sequence of approximation sieves space. The corresponding covariance matrices are
\[
\Sigma_{o1}=\E[\varepsilon_1(O;\tau_0,\phi_0)\varepsilon_1(O;\tau_0,\phi_0)'],
\qquad
\Sigma_{o2}(W)=\E[\rho_2(O;\phi_0)\rho_2(O;\phi_0)'\mid W].
\]

For the variance estimation, \citet{ai2012semiparametric} define the pseudo-metric on nuisance directions $h-h_0$ by
\[
\|h-h_0\|^2
=
\sum_{t=1}^T
\E\left[
\left\{\frac{d m_t(X^{(t)},\alpha_0)}{dh}[h-h_0]\right\}'
\Sigma_{ot}(X^{(t)})^{-1}
\left\{\frac{d m_t(X^{(t)},\alpha_0)}{dh}[h-h_0]\right\}
\right].
\]
For each component $\theta_j$ of $\theta$, let $r_{oj}$ solve the projection problem
\begin{equation}\label{eq:projection_general}
    \inf_{r_j\in\Wcal}
\sum_{t=1}^T
\E\left[
\left\|
\Sigma_{ot}(X^{(t)})^{-1/2}
\left(
\frac{d m_t(X^{(t)},\alpha_0)}{d\theta_j}
-
\frac{d m_t(X^{(t)},\alpha_0)}{dh}[r_j]
\right)
\right\|^2
\right].
\end{equation}
Stack these solutions into $r_o=(r_{o1},\dots,r_{od_\theta})$. Then the information matrix is
\begin{equation}\label{eq:Jo_general}
    J_o
=
\sum_{t=1}^T
\E\left[
\left(
\frac{d m_t(X^{(t)},\alpha_0)}{d\theta'}
-
\frac{d m_t(X^{(t)},\alpha_0)}{dh}[r_o]
\right)'
\Sigma_{ot}(X^{(t)})^{-1}
\left(
\frac{d m_t(X^{(t)},\alpha_0)}{d\theta'}
-
\frac{d m_t(X^{(t)},\alpha_0)}{dh}[r_o]
\right)
\right].
\end{equation}

Theorem 2.1 then gives
\[
\sqrt{n}(\widehat\theta-\theta_0) \Rightarrow N(0,J_o^{-1})
\]
when the bound is nonsingular and the equation (11) estimator is used.

In our case, our target parameter is one-dimensional.
Hence $J_o$ is scalar. Since $T=2$, \eqref{eq:Jo_general} becomes
\[
J_\tau
=
\E\left[
\left(
D_\tau m_1 - D_\phi m_1[r_o]
\right)'
\Sigma_{o1}^{-1}
\left(
D_\tau m_1 - D_\phi m_1[r_o]
\right)
\right]
+
\E\left[
\left(D_\phi m_2(W)[r_o]\right)'
\Sigma_{o2}(W)^{-1}
\left(D_\phi m_2(W)[r_o]\right)
\right],
\]
where $r_o\in\Wcal$ solves the one-dimensional version of \eqref{eq:projection_general}.

Therefore the joint estimator's asymptotic variance is
\[
\Var\bigl(\sqrt{n}(\widehat\tau-\tau_0)\bigr)=J_\tau^{-1},
\qquad
\Var(\widehat\tau)=\frac{1}{n}J_\tau^{-1}+o(n^{-1}).
\]

Because $\rho_1$ is linear in $\tau$,
\[
D_\tau \rho_1(O;\tau_0,\phi_0) = -\one_J,
\]
where $\one_J$ is the $J\times 1$ vector of ones. Since $\Gamma(W)$ depends on the truth and not on the local perturbation in $\tau$,
\[
D_\tau m_1 = -\one_J.
\]

Let $r=(r_2,\dots,r_J)\in\Wcal$. Then
\[
D_\phi \rho_2[r]
=
-\begin{pmatrix}
r_2(Y_2) \\
\vdots \\
r_J(Y_J)
\end{pmatrix},
\]
so
\[
D_\phi m_2(W)[r]
=
-\begin{pmatrix}
\E[r_2(Y_2)\mid W] \\
\vdots \\
\E[r_J(Y_J)\mid W]
\end{pmatrix}.
\]

For the first block,
\[
D_\phi \rho_1[r]
=
\begin{pmatrix}
0 \\
s(Z)r_2(Y_2) \\
\vdots \\
s(Z)r_J(Y_J)
\end{pmatrix}.
\]
Since
\[
\varepsilon_1=\rho_1-\Gamma(W)\rho_2,
\]
its directional derivative is
\[
D_\phi \varepsilon_1[r]
=
D_\phi \rho_1[r]-\Gamma(W)D_\phi \rho_2[r],
\]
which yields
\[
D_\phi m_1[r]
=
\E\left[
\begin{pmatrix}
0 \\
s(Z)r_2(Y_2) \\
\vdots \\
s(Z)r_J(Y_J)
\end{pmatrix}
+
\Gamma(W)
\begin{pmatrix}
r_2(Y_2) \\
\vdots \\
r_J(Y_J)
\end{pmatrix}
\right].
\]

The direction $r_o$ solves
\[
r_o
\in
\argmin_{r\in\Wcal}
\Bigg\{
\left(D_\tau m_1-D_\phi m_1[r]\right)'\Sigma_{o1}^{-1}\left(D_\tau m_1-D_\phi m_1[r]\right)
+
\E\left[\left(D_\phi m_2(W)[r]\right)'\Sigma_{o2}(W)^{-1}\left(D_\phi m_2(W)[r]\right)\right]
\Bigg\}.
\]
This is the nuisance-projection step. Intuitively, it removes from the naive score for $\tau$ the part that can be generated by local perturbations in the bridge functions.

In practice, use the same basis family as for $\phi_j$. Let
\[
r_j(y) \approx q_j(y)'\gamma_j,
\qquad j=2,\dots,J,
\]
and stack $\gamma=(\gamma_2',\dots,\gamma_J')'$. Then objective function becomes a finite-dimensional quadratic minimization problem in $\gamma$.

Let
\[
A(\gamma) = D_\tau m_1 - D_\phi m_1[r_\gamma],
\qquad
B_i(\gamma)=D_\phi m_2(W_i)[r_\gamma].
\]
Then the sample analogue is
\[
\widehat\gamma
=
\argmin_\gamma
\left\{
A_n(\gamma)'\widehat\Sigma_{o1}^{-1}A_n(\gamma)
+
\frac1n\sum_{i=1}^n B_i(\gamma)'\widehat\Sigma_{o2}(W_i)^{-1}B_i(\gamma)
\right\},
\]
with $A_n(\gamma)$ and $B_i(\gamma)$ formed by replacing population expectations with their sample or nonparametric estimates.

Finally,
\[
\widehat J_\tau
=
A_n(\widehat\gamma)'\widehat\Sigma_{o1}^{-1}A_n(\widehat\gamma)
+
\frac1n\sum_{i=1}^n B_i(\widehat\gamma)'\widehat\Sigma_{o2}(W_i)^{-1}B_i(\widehat\gamma),
\]
and
\[
\widehat{\Var}(\widehat\tau)=\frac{1}{n}\widehat J_\tau^{-1}.
\]

\section{Estimation under Weak Identification}\label{si:bennett}

This section describes how to estimate the causal effect of treatment on the latent outcome when the second-stage estimand is defined by a regression of the bridge-transformed outcome on treatment
and observed controls. The key idea is to treat each regression coefficient as a continuous linear functional of the measurement bridge function and then apply the minimax and cross-fitting strategy of \citet{bennett2025inference}.

\subsection{Target parameter}

For each measurement $j=1,\ldots,J$, let $\phi_{0j}$ denote a bridge function satisfying
\begin{equation}
E\!\left[Y_{i1}\mid \eta_i\right]
=
E\!\left[\phi_{0j}(Y_{ij})\mid \eta_i\right].
\end{equation}
Bridge function $\phi_{0j}$ is identified through the NPIV restriction
\begin{equation}
E\!\left[Y_{i1}-\phi_j(Y_{ij}) \mid W_i \right] = 0,
\label{eq:npiv-bridge}
\end{equation}
where $W_i$ is the vector of instrumental variables used in the first stage.

Let
\begin{equation}
R_i=
\begin{pmatrix}
1\\
Z_i\\
X_i
\end{pmatrix}
\in \mathbb{R}^{d_R},
\qquad
d_R = 2+\dim(X_i),
\end{equation}
and define the population best linear predictor coefficient of the transformed outcome
$\phi_{0j}(Y_{ij})$ on $(1,Z_i,X_i^\top)$ by
\begin{equation}
\beta_{0j}
=
\arg\min_{\beta\in\mathbb{R}^{d_R}}
E\!\left[\left(\phi_{0j}(Y_{ij})-R_i^\top\beta\right)^2\right].
\label{eq:beta-def}
\end{equation}
Equivalently,
\begin{equation}
\beta_{0j}
=
M^{-1}E\!\left[R_i\phi_{0j}(Y_{ij})\right],
\qquad
M:=E[R_iR_i^\top].
\label{eq:beta-closed-form}
\end{equation}
We assume $M$ is nonsingular.

Our main parameter of interest is the coefficient on the treatment indicator $Z_i$. Let
$e_\ell$ denote the $\ell$th unit vector in $\mathbb{R}^{d_R}$. Then the $\ell$th regression
coefficient can be written as
\begin{equation}
\beta_{0j,\ell}
=
e_\ell^\top\beta_{0j}
=
E\!\left[\alpha_{0\ell}(R_i)\phi_{0j}(Y_{ij})\right],
\label{eq:linear-functional-beta}
\end{equation}
where
\begin{equation}
\alpha_{0\ell}(R_i):=e_\ell^\top M^{-1}R_i.
\label{eq:riesz-alpha}
\end{equation}
Hence each coefficient is a continuous linear functional of the bridge function $\phi_{0j}$.
The treatment effect coefficient corresponds to $\ell=2$.

\subsection{Strong identification and orthogonal score}

Because $\phi_{0j}$ is defined by the NPIV restriction in \eqref{eq:npiv-bridge}, it may be weakly
identified even when the coefficient $\beta_{0j,\ell}$ is strongly identified. Following
\citet{bennett2025inference}, for each $j$ and $\ell$ define the linear functional
\begin{equation}
m_{j\ell}(O_i;\phi_j)
:=
\alpha_{0\ell}(R_i)\phi_j(Y_{ij}),
\label{eq:mjl}
\end{equation}
where $O_i=(Y_{i1},Y_{ij},Z_i,X_i,W_i)$ collects the observed data needed for estimation. We say $\beta_{0j,\ell}$ is strongly identified if
\begin{equation}
\Xi_{0,j\ell}\neq \varnothing,
\qquad
\Xi_{0,j\ell}
:=
\arg\min_{\xi\in\Phi}
\left\{
\frac{1}{2}
E\!\left[
E\!\left[\xi(Y_{ij})\mid W_i\right]^2
\right]
-
E\!\left[m_{j\ell}(O_i;\xi)\right]
\right\},
\label{eq:strong-id-beta}
\end{equation}
where $\Phi$ is the function space for the bridge function.
Let $\xi_{0,j\ell}\in \Xi_{0,j\ell}$, and define the corresponding debiasing nuisance
\begin{equation}
q_{0,j\ell}(W_i)
:=
E\!\left[\xi_{0,j\ell}(Y_{ij})\mid W_i\right].
\label{eq:q0-def}
\end{equation}
Then the target coefficient admits the orthogonal representation
\begin{equation}
\beta_{0j,\ell}
=
E\!\left[\psi_{j\ell}(O_i;\phi_{0j},q_{0,j\ell},\beta_{0j,\ell})\right],
\end{equation}
where
\begin{equation}
\psi_{j\ell}(O_i;\phi_j,q,\beta)
=
\alpha_{0\ell}(R_i)\phi_j(Y_{ij})
+
q(W_i)\Bigl(Y_{i1}-\phi_j(Y_{ij})\Bigr)
-
\beta.
\label{eq:orth-score-beta}
\end{equation}
This score is Neyman orthogonal with respect to the first-stage nuisance $\phi_j$.

\subsection{First-stage minimax estimation}

We estimate the bridge function $\phi_{0j}$ and the debiasing nuisance $q_{0,j\ell}$ using
penalized minimax estimators.

\paragraph{Bridge function.}
Let $\Phi_n$ denote a sieve or other approximating function class for $\phi_j$, and let
$\mathcal{Q}_n$ denote the critic class. On a training sample $\mathcal{I}^c_k$, define
\begin{equation}
\hat\phi^{(-k)}_j
=
\arg\min_{\phi\in\Phi_n}
\sup_{q\in\mathcal{Q}_n}
\mathbb{E}_{n,-k}
\Bigl[
\bigl(\phi(Y_j)-Y_1\bigr)q(W)
-\frac{1}{2}q(W)^2
+
\mu_n \phi_j(Y_j)^2\Bigr]-\gamma^q_n||q||^2_Q+\gamma^\phi_n ||\phi_j||^2_\Phi,
\label{eq:minimax-phi}
\end{equation}
where $\mathbb{E}_{n,-k}$ denotes the empirical average over the observations in
$\mathcal{I}^c_k$, $\|\cdot\|_\Phi$ is a penalty norm, and $\mu_{n},\gamma^{q}_n,\gamma^{\phi}_n$ are regularization
hyperparameters. This criterion is the direct analogue of the Bennett minimax estimator for the primary
nuisance, specialized to our bridge equation.

\paragraph{Auxiliary nuisance for debiasing.}
For each coefficient $\ell$, let $\Xi_n$ denote an approximating class for $\xi_{0,j\ell}$ and
let $\mathcal{Q}_n$ again denote the critic class. Define
\begin{equation}
\hat\xi^{(-k)}_{j\ell}
=
\arg\min_{\xi\in\Xi_n}
\sup_{q\in\mathcal{Q}_n}
\mathbb{E}_{n,-k}
\Bigl[
\xi(Y_j)q(W)-\frac{1}{2}q(W)^2
-
\alpha_{0\ell}(R)\xi(Y_j) \Bigr]
-\gamma^q_{j,n}\|q\|_{Q_j}^2+\gamma^{\xi}_{j,n}\|\xi_j\|_{\Xi_j}^2
.
\label{eq:minimax-xi}
\end{equation}
This is the sample analogue of the strong-identification minimization problem in
\eqref{eq:strong-id-beta}.

\paragraph{Projection step for the debiasing nuisance.}
Because the orthogonal score uses $q_{0,j\ell}(W)=E[\xi_{0,j\ell}(Y_j)\mid W]$, we next estimate
that projection directly. Let $\widetilde{\mathcal{Q}}_n$ be a hypothesis space for functions of $W$.
Define
\begin{equation}
\hat q^{(-k)}_{j\ell}
=
\arg\min_{q\in\widetilde{\mathcal{Q}}_n}
\mathbb{E}_{n,-k}
\left[
\Bigl(
q(W)-\hat\xi^{(-k)}_{j\ell}(Y_j)
\Bigr)^2
\right]+\widetilde\gamma_{j,l}\|q\|_{\widetilde Q}^2.
\label{eq:q-projection}
\end{equation}
In practice, \eqref{eq:q-projection} may be implemented by least squares, series regression,
kernel ridge regression, or any other suitable regression method.

\paragraph{Estimating the Riesz weight $\alpha_{0\ell}(R)$.}
Since $\alpha_{0\ell}(R)=e_\ell^\top M^{-1}R$, we estimate it on the training fold by
\begin{equation}
\hat M^{(-k)}
=
\mathbb{E}_{n,-k}[RR^\top],
\qquad
\hat\alpha^{(-k)}_{\ell}(R_i)
=
e_\ell^\top \bigl(\hat M^{(-k)}\bigr)^{-1}R_i.
\label{eq:alpha-hat}
\end{equation}

\subsection{Cross-fitted score construction}
\label{subsec:crossfit-reg}

Partition the sample into $K$ folds $\mathcal{I}_1,\ldots,\mathcal{I}_K$. For each fold $k$,
let $\mathcal{I}_k^c$ denote the corresponding training sample and $\mathcal{I}_k$ the validation
sample. All nuisance functions are estimated using only observations in $\mathcal{I}_k^c$.

For the regression target, define the training-fold Gram matrix
\begin{equation}
\hat M^{(-k)}
=
\mathbb{E}_{n,-k}[R R^\top]
=
\frac{1}{|\mathcal{I}_k^c|}
\sum_{i\in\mathcal{I}_k^c} R_i R_i^\top,
\label{eq:Mhat-fold}
\end{equation}
and let
\begin{equation}
\hat\alpha_i^{(-k)}
=
\bigl(\hat M^{(-k)}\bigr)^{-1}R_i,
\qquad
\hat\alpha_{i\ell}^{(-k)}
=
e_\ell^\top \hat\alpha_i^{(-k)},
\qquad
\ell=1,\ldots,d_R.
\label{eq:alpha-hat-fold}
\end{equation}

Next consider a non-reference measurement $j\ge 2$. On the training sample $\mathcal{I}_k^c$,
we estimate one common bridge function $\hat\phi_j^{(-k)}$ using \eqref{eq:minimax-phi}. Then,
for each regression coefficient $\ell=1,\ldots,d_R$, we estimate a coefficient-specific auxiliary
nuisance $\hat\xi_{j\ell}^{(-k)}$ using \eqref{eq:minimax-xi}, and then estimate the projected
debiasing nuisance $\hat q_{j\ell}^{(-k)}$ using \eqref{eq:q-projection}.

For each validation observation $i\in\mathcal{I}_k$, the held-out orthogonal score for measurement
$j$ and coefficient $\ell$ is
\begin{equation}
\hat\psi_{ij\ell}
=
\hat\alpha_{i\ell}^{(-k)}\hat\phi_j^{(-k)}(Y_{ij})
+
\hat q_{j\ell}^{(-k)}(W_i)
\Bigl(
Y_{i1}-\hat\phi_j^{(-k)}(Y_{ij})
\Bigr).
\label{eq:psi-hat-jl}
\end{equation}
Stacking \eqref{eq:psi-hat-jl} over $\ell=1,\ldots,d_R$ yields a $d_R$-dimensional score vector
for observation $i$ under measurement $j$.

\paragraph{Reference measurement.}
If the benchmark measurement $Y_{i1}$ is included among the candidate measurements, then its
bridge is the identity map, $\phi_1(Y_{i1})=Y_{i1}$, so the residual term vanishes identically.
Hence no debiasing correction is needed. In that case, for each coefficient $\ell$ and each held-out
observation $i\in\mathcal{I}_k$, the score reduces to
\begin{equation}
\hat\psi_{i1\ell}
=
\hat\alpha_{i\ell}^{(-k)}Y_{i1}.
\label{eq:psi-reference}
\end{equation}

\subsection{Option 1: Coefficient-by-coefficient GMM}
\label{subsec:measurement-score-matrices}
In the second stage, we will use GMM to estimate the parameter due to over-identification. For each measurement $j$, collecting the cross-fitted scores over all observations yields an
$n\times d_R$ score matrix
\begin{equation}
\hat\Psi_j
=
\begin{pmatrix}
\hat\psi_{1j1} & \cdots & \hat\psi_{1j d_R}\\
\vdots & \ddots & \vdots\\
\hat\psi_{nj1} & \cdots & \hat\psi_{nj d_R}
\end{pmatrix}.
\label{eq:score-matrix-j}
\end{equation}
Its column means define the measurement-specific sample moment vector
\begin{equation}
\bar{\hat\psi}_j
=
\frac{1}{n}\sum_{i=1}^n \hat\Psi_{j,i\cdot}
=
\begin{pmatrix}
\hat\beta_{j,1}\\
\vdots\\
\hat\beta_{j,d_R}
\end{pmatrix},
\label{eq:psi-bar-j}
\end{equation}
where $\hat\Psi_{j,i\cdot}$ denotes the $i$th row of $\hat\Psi_j$. These measurement-specific
averages can be used in the second stage Coefficient-by-coefficient GMM. 

\subsection{Option 2: Joint multivariate GMM pooling across measurements}
\label{subsec:joint-gmm-pooling}

We can also run joint GMM. Suppose that $J$ measurements are included in the final pooling step. For each observation $i$,
stack the measurement-specific score vectors horizontally to form the $Jd_R$-dimensional vector
\begin{equation}
\hat\psi_{i,\mathrm{stack}}
=
\begin{pmatrix}
\hat\Psi_{1,i\cdot}^\top &
\hat\Psi_{2,i\cdot}^\top &
\cdots &
\hat\Psi_{J,i\cdot}^\top
\end{pmatrix}^\top.
\label{eq:psi-stack-i}
\end{equation}
Equivalently, at the sample level we may write the stacked score matrix as
\begin{equation}
\hat\Psi_{\mathrm{stack}}
=
\bigl(
\hat\Psi_1 \ \hat\Psi_2 \ \cdots \ \hat\Psi_J
\bigr),
\label{eq:Psi-stack}
\end{equation}
which is an $n\times (Jd_R)$ matrix.

Let
\begin{equation}
\bar{\hat\psi}_{\mathrm{stack}}
=
\frac{1}{n}\sum_{i=1}^n \hat\psi_{i,\mathrm{stack}}
\in \mathbb{R}^{Jd_R}
\label{eq:psi-bar-stack}
\end{equation}
denote the sample mean of the stacked moments. Because each measurement targets the same
population coefficient vector $\beta_0\in\mathbb{R}^{d_R}$, the restriction linking the stacked
moments to the common parameter is
\begin{equation}
E\!\left[\hat\psi_{i,\mathrm{stack}} - A\beta_0\right]=0,
\qquad
A = \mathbf{1}_J\otimes I_{d_R},
\label{eq:stacked-moment-restriction}
\end{equation}
where $\mathbf{1}_J$ is the $J\times 1$ vector of ones and $I_{d_R}$ is the $d_R\times d_R$
identity matrix. The matrix $A$ maps the common coefficient vector $\beta$ into the stacked
$Jd_R$-vector with the same coefficient vector repeated once for each measurement.

Let
\begin{equation}
\hat\Omega
=
\frac{1}{n}\sum_{i=1}^n
\Bigl(
\hat\psi_{i,\mathrm{stack}}-\bar{\hat\psi}_{\mathrm{stack}}
\Bigr)
\Bigl(
\hat\psi_{i,\mathrm{stack}}-\bar{\hat\psi}_{\mathrm{stack}}
\Bigr)^\top
\label{eq:Omega-stack}
\end{equation}
be the covariance matrix of the stacked moments, and let the weighting matrix be
\begin{equation}
\hat W
=
\begin{cases}
I_{Jd_R}, & \text{if identity weighting is used},\\[0.3em]
\hat\Omega^{-1}, & \text{if efficient weighting is used}.
\end{cases}
\label{eq:W-stack}
\end{equation}

The joint multivariate GMM estimator solves
\begin{equation}
\hat\beta^{\,\mathrm{joint}}
\in
\arg\min_{\beta\in\mathbb{R}^{d_R}}
\Bigl(
\bar{\hat\psi}_{\mathrm{stack}}-A\beta
\Bigr)^\top
\hat W
\Bigl(
\bar{\hat\psi}_{\mathrm{stack}}-A\beta
\Bigr).
\label{eq:joint-gmm-objective}
\end{equation}
Because this is a linear GMM problem, the solution has closed form:
\begin{equation}
\hat\beta^{\,\mathrm{joint}}
=
\bigl(A^\top \hat W A\bigr)^{-1}
A^\top \hat W \bar{\hat\psi}_{\mathrm{stack}}.
\label{eq:joint-gmm-solution}
\end{equation}
In particular, if the treatment indicator is the second column of the regression design matrix, then
the final pooled treatment effect estimator is
\begin{equation}
\hat\tau^{\,\mathrm{reg}}
=
e_2^\top \hat\beta^{\,\mathrm{joint}}.
\label{eq:tau-joint}
\end{equation}

\paragraph{Aggregation matrix and pooled score matrix.}
Define the aggregation matrix
\begin{equation}
\hat L
=
\bigl(A^\top \hat W A\bigr)^{-1}A^\top \hat W.
\label{eq:L-hat}
\end{equation}
Then \eqref{eq:joint-gmm-solution} can be written compactly as
\begin{equation}
\hat\beta^{\,\mathrm{joint}}
=
\hat L \bar{\hat\psi}_{\mathrm{stack}}.
\label{eq:joint-gmm-L}
\end{equation}
At the observation level, the corresponding pooled $n\times d_R$ score matrix is
\begin{equation}
\tilde\Psi
=
\hat\Psi_{\mathrm{stack}}\hat L^\top.
\label{eq:pooled-score-matrix-joint}
\end{equation}
This is the multivariate analogue of the pooled score matrix in the coefficient-wise implementation.

\subsection{Variance estimation}
\label{subsec:variance-joint-gmm}

Under the linear GMM representation above, the asymptotic covariance of the joint estimator is
\begin{equation}
\Var\!\left(
\sqrt{n}\bigl(\hat\beta^{\,\mathrm{joint}}-\beta_0\bigr)
\right)
=
L \Omega L^\top,
\label{eq:asymptotic-var-joint}
\end{equation}
where
\begin{equation}
L = \bigl(A^\top W A\bigr)^{-1}A^\top W.
\label{eq:L-pop}
\end{equation}
The feasible covariance estimator is therefore
\begin{equation}
\widehat{\Var}\!\left(
\sqrt{n}\bigl(\hat\beta^{\,\mathrm{joint}}-\beta_0\bigr)
\right)
=
\hat L \hat\Omega \hat L^\top,
\label{eq:rootn-var-joint-hat}
\end{equation}
and the finite-sample covariance matrix is
\begin{equation}
\widehat{\Var}\!\left(\hat\beta^{\,\mathrm{joint}}\right)
=
\frac{1}{n}\hat L \hat\Omega \hat L^\top.
\label{eq:finite-var-joint}
\end{equation}
Standard errors are obtained from the square roots of the diagonal entries of
\eqref{eq:finite-var-joint}. In particular, the standard error for the treatment effect is the square
root of the second diagonal element when the treatment indicator is the second regressor.


\section{More Information on the Empirical Application}\label{si:app}

\subsection{Data}

Here we list the outcome measurements and covariates we used in the application section.

\textbf{Anti-Immigrant Prejudice Index.} The first set of questions are five point scales where respondents were asked: "Do you agree or disagree with the below statements about undocumented or illegal immigrants?" Response options were: Strongly agree, Somewhat agree, Neither agree nor disagree, Somewhat disagree, Strongly disagree:
\begin{itemize}
    \item `living': "I would have no problem living in areas where undocumented immigrants live."
    \item `fit': "Too many undocumented immigrants just don’t want to fit into American society."
    \item `burden': "Undocumented immigrants are too much of a burden on our communities."
    \item `crime': "Undocumented immigrants have already broken the law coming here illegally, so they are more likely to commit other crimes."
    \item `values': "Undocumented immigrants hold the same values as me and my family."

\end{itemize}

\textbf{Anti-Immigrant Policy Index.} Respondents were first asked: "Politicians are considering a number of policies about immigration. We want to know what you think. Do you agree or disagree with the statements below?" Response options were: Strongly agree, Somewhat agree, Neither agree nor disagree, Somewhat disagree, Strongly disagree:

\begin{itemize}
    \item `daca': "The federal government should grant legal status to people who were brought to the US illegally as children and who have graduated from a U.S. high school."
    \item `citizenship': "The federal government should allow undocumented immigrants currently in the U.S. to become citizens after they have lived, worked, and paid taxes for at least 5 years."
    \item `compassion': "Undocumented immigrants deserve compassion and should not live in daily fear of deportation."
\end{itemize}


\textbf{Baseline covariates.} They include the pre-treatment variables `daca',`citizenship',`living' and `fit'. We add them together to construct a single index.

\subsection{Supplementary results}

\begin{table}[!htbp] \centering \renewcommand*{\arraystretch}{1.1}\caption{Summary Statistics}\label{tab:summary}
\resizebox{\textwidth}{!}{
\begin{tabular}{lrrrrrrr}
\hline
\hline
Variable & N & Mean & Std. Dev. & Min & Pctl. 25 & Pctl. 75 & Max \\ 
\hline
Full Treatment & 7870 & 0.33 & 0.47 & 0 & 0 & 1 & 1 \\ 
Abbreviated Treatment & 7870 & 0.33 & 0.47 & 0 & 0 & 1 & 1 \\ 
daca & 1578 & 0.6 & 1.4 & -2 & -1 & 2 & 2 \\ 
citizenship & 1578 & -0.14 & 1.5 & -2 & -2 & 1 & 2 \\ 
compassion & 1578 & -0.34 & 1.5 & -2 & -2 & 1 & 2 \\ 
living & 1578 & 0.51 & 1.3 & -2 & 0 & 2 & 2 \\ 
values & 1578 & 0.58 & 1.2 & -2 & 0 & 2 & 2 \\ 
fit & 1578 & -0.4 & 1.4 & -2 & -2 & 1 & 2 \\ 
burden & 1578 & -0.27 & 1.4 & -2 & -2 & 1 & 2 \\ 
crime & 1578 & -0.87 & 1.2 & -2 & -2 & 0 & 2 \\ 
\hline
\hline
\end{tabular}
}
\end{table}

\begin{table}[ht]
\centering
\caption{Covariance Matrix}\label{si:cov}
\resizebox{\textwidth}{!}{
\begin{tabular}{rrrrrr}
  \hline
  \hline
 & Treatment (full) & Treatment (mod) & Attitudes & Policy Views & Covariates \\ 
  \hline
Treatment (full) (Ave: 0.33) & 0.226 & -0.109 & 0.061 & 0.104 & -0.035 \\ 
  Treatment (mod) (Ave: 0.33) & -0.109 & 0.216 & 0.008 & 0.010 & 0.052 \\ 
  Attitudes (Ave: 2.06) & 0.061 & 0.008 & 12.328 & 15.187 & 12.775 \\ 
  Policy Views (Ave: 2.63) & 0.104 & 0.010 & 15.187 & 30.222 & 19.782 \\ 
  Covariates (Ave: 1.76) & -0.035 & 0.052 & 12.775 & 19.782 & 19.318 \\ 
   \hline
   \hline
\end{tabular}
}
\end{table}

\begin{figure}[!h]
    \centering
    \includegraphics[width=0.9\linewidth]{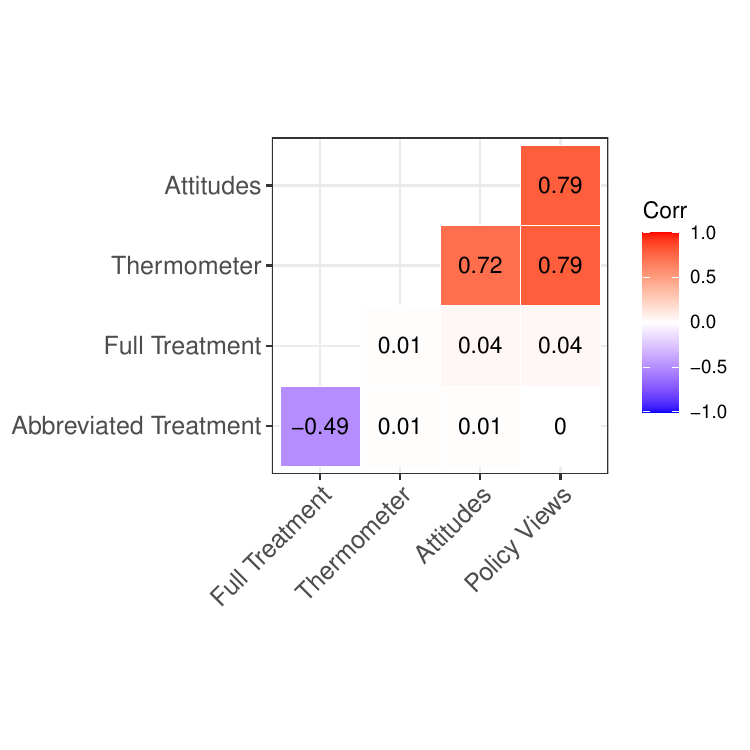}
    \caption{Correlation Matrix}
    \label{fig:cor}
\end{figure}

\begin{figure}[!h]
    \centering
    \includegraphics[width=0.9\linewidth]{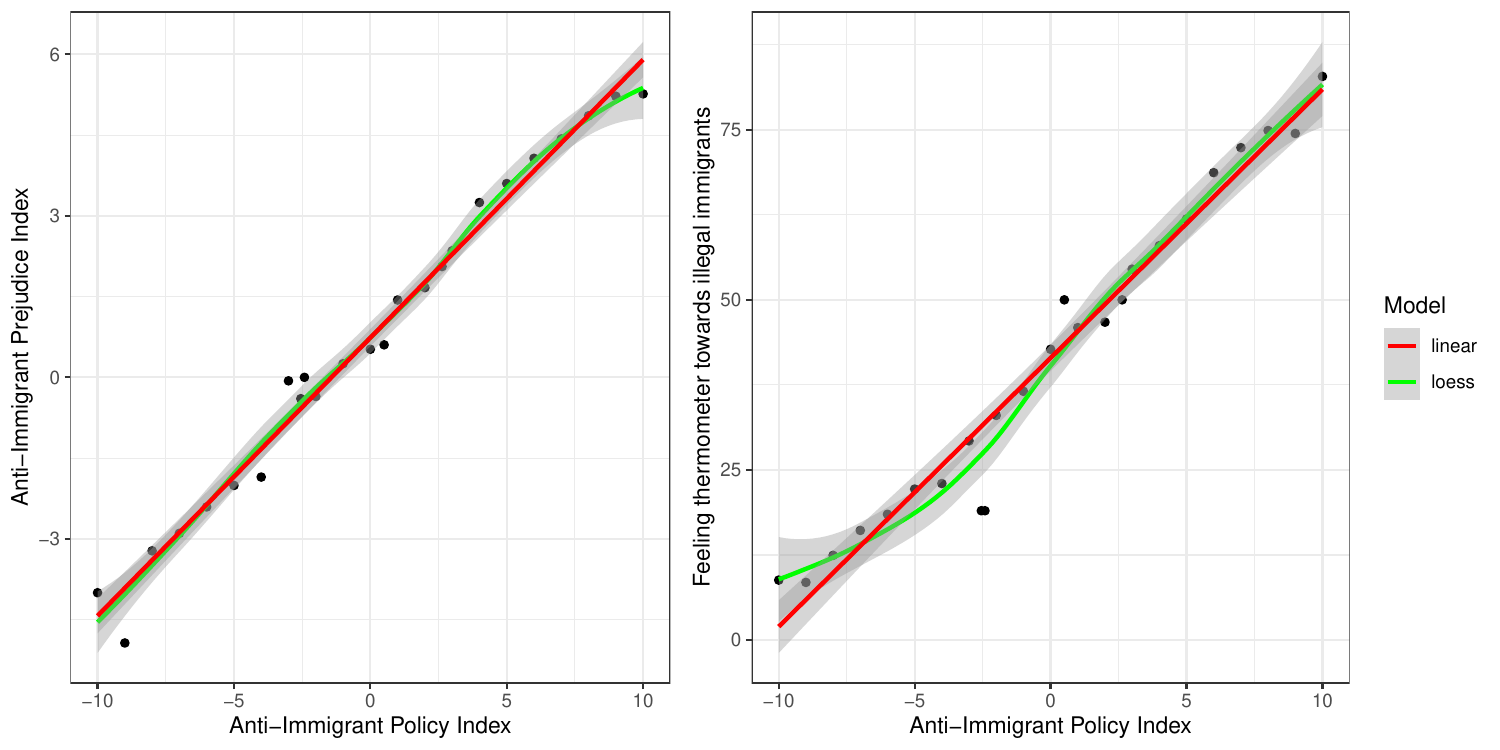}
    \caption{Linearity check}
    \label{fig:lin}
\end{figure}


\begin{figure}[!h]
    \centering
\includegraphics[width=0.9\linewidth]{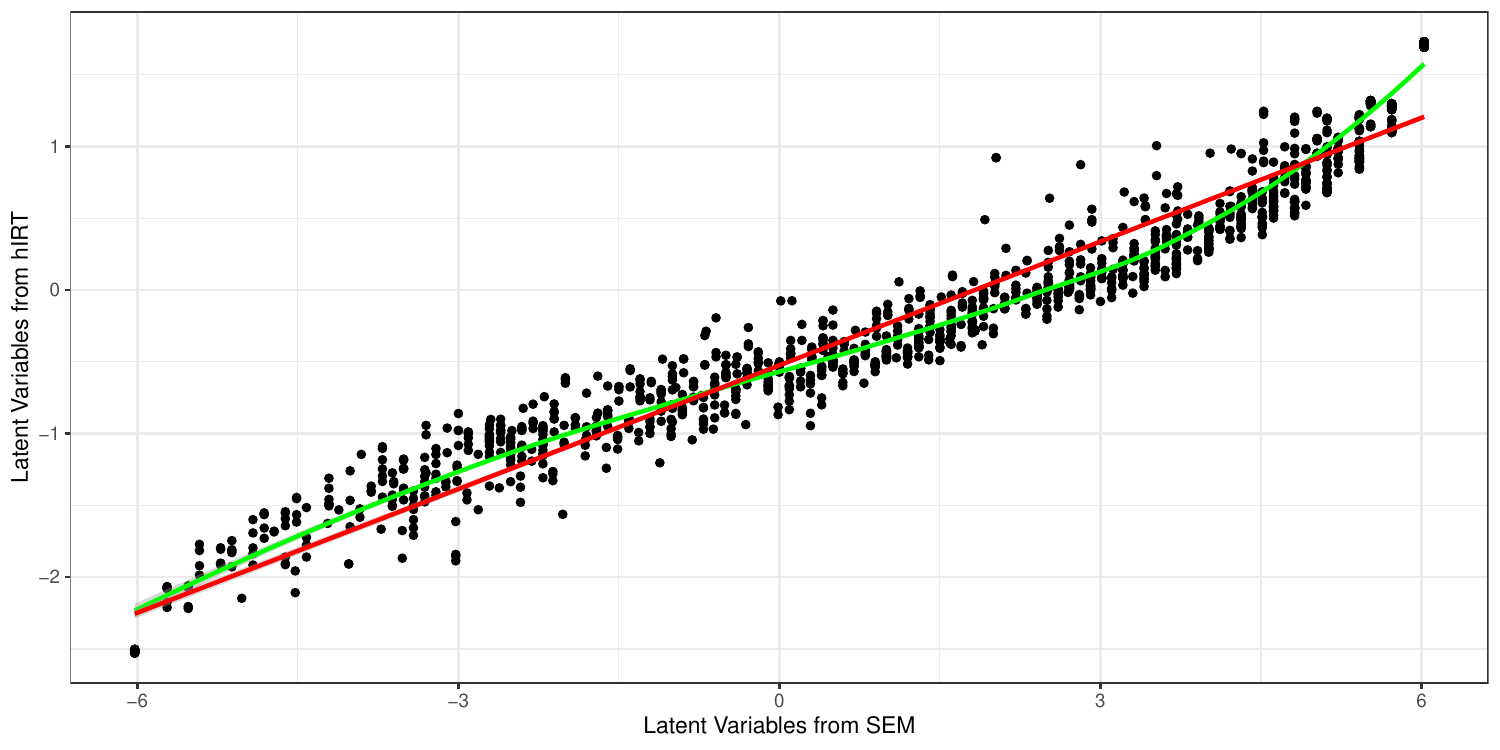}\caption{Scatterplot of the Imputed Latent Variables implied by the SEM and hIRT Models, with Linear and LOESS Fitted Lines}\label{fig:sem_hirt}
\end{figure}

\clearpage

%% file: literature.bib
@article{sorbom1981structural,
  title={Structural equation models with structured means},
  author={Sorbom, Dag},
  journal={Systems under indirect observation},
  pages={183--195},
  year={1981},
  publisher={North-Holland}
}

@article{chernozhukov2018double,
  title={Double/debiased machine learning for treatment and structural parameters},
  author={Chernozhukov, Victor and Chetverikov, Denis and Demirer, Mert and Duflo, Esther and Hansen, Christian and Newey, Whitney and Robins, James},
  journal={The Econometrics Journal},
  year={2018},
   volume={21},
  publisher={Oxford University Press Oxford, UK}
}

@book{pearl2009causality,
  title={Causality},
  author={Pearl, Judea},
  year={2009},
  publisher={Cambridge university press}
}

@book{hernan2020causal,
  title={Causal inference: What If},
  author={Hern{\'a}n, Miguel A and Robins, James M},
  year={2020},
  publisher={CRC Boca Raton, FL}
}

@book{lord2008statistical,
  title={Statistical theories of mental test scores},
  author={Lord, Frederic M and Novick, Melvin R},
  year={2008},
  publisher={IAP}
}

@book{embretson2013item,
  title={Item response theory for psychologists},
  author={Embretson, Susan E and Reise, Steven P},
  year={2013},
  publisher={Psychology Press}
}

@article{hotelling1933analysis,
  title={Analysis of a complex of statistical variables into principal components.},
  author={Hotelling, Harold},
  journal={Journal of educational psychology},
  volume={24},
  number={6},
  pages={417},
  year={1933},
  publisher={Warwick \& York}
}

@article{egami2022make,
  title={How to make causal inferences using texts},
  author={Egami, Naoki and Fong, Christian J and Grimmer, Justin and Roberts, Margaret E and Stewart, Brandon M},
  journal={Science Advances},
  volume={8},
  number={42},
  pages={eabg2652},
  year={2022},
  publisher={American Association for the Advancement of Science}
}

@article{vishnubhatla2026proxy,
  title={Proxy-Guided Measurement Calibration},
  author={Vishnubhatla, Saketh and Wan, Shu and Harrison, Andre and Raglin, Adrienne and Liu, Huan},
  journal={arXiv preprint arXiv:2603.09288},
  year={2026}
}

@article{landy2025causal,
  title={Causal Inference for Latent Outcomes Learned with Factor Models},
  author={Landy, Jenna M and Zorzetto, Dafne and De Vito, Roberta and Parmigiani, Giovanni},
  journal={arXiv preprint arXiv:2506.20549},
  year={2025}
}

@article{tchetgen2020introduction,
  title={An introduction to proximal causal learning},
  author={Tchetgen, Eric J Tchetgen and Ying, Andrew and Cui, Yifan and Shi, Xu and Miao, Wang},
  journal={arXiv preprint arXiv:2009.10982},
  year={2020}
}

@article{severini2012efficiency,
  title={Efficiency bounds for estimating linear functionals of nonparametric regression models with endogenous regressors},
  author={Severini, Thomas A and Tripathi, Gautam},
  journal={Journal of Econometrics},
  volume={170},
  number={2},
  pages={491--498},
  year={2012},
  publisher={Elsevier}
}

@article{zhou2019hierarchical,
  title={Hierarchical item response models for analyzing public opinion},
  author={Zhou, Xiang},
  journal={Political Analysis},
  volume={27},
  number={4},
  pages={481--502},
  year={2019},
  publisher={Cambridge University Press}
}

@article{zhang2023proximal,
  title={Proximal causal inference without uniqueness assumptions},
  author={Zhang, Jeffrey and Li, Wei and Miao, Wang and Tchetgen, Eric Tchetgen},
  journal={Statistics \& probability letters},
  volume={198},
  pages={109836},
  year={2023},
  publisher={Elsevier}
}

@article{chen2025adaptive,
  title={Adaptive estimation and uniform confidence bands for nonparametric structural functions and elasticities},
  author={Chen, Xiaohong and Christensen, Timothy and Kankanala, Sid},
  journal={Review of Economic Studies},
  volume={92},
  number={1},
  pages={162--196},
  year={2025},
  publisher={Oxford University Press UK}
}

@article{bennett2025inference,
  title={Inference on strongly identified functionals of weakly identified functions},
  author={Bennett, Andrew and Kallus, Nathan and Mao, Xiaojie and Newey, Whitney K and Syrgkanis, Vasilis and Uehara, Masatoshi},
  journal={Journal of the Royal Statistical Society Series B: Statistical Methodology},
  pages={qkaf075},
  year={2025},
  publisher={Oxford University Press UK}
}

@article{newey2003instrumental,
  title={Instrumental variable estimation of nonparametric models},
  author={Newey, Whitney K and Powell, James L},
  journal={Econometrica},
  volume={71},
  number={5},
  pages={1565--1578},
  year={2003},
  publisher={Wiley Online Library}
}

@article{horowitz2011applied,
  title={Applied nonparametric instrumental variables estimation},
  author={Horowitz, Joel L},
  journal={Econometrica},
  volume={79},
  number={2},
  pages={347--394},
  year={2011},
  publisher={Wiley Online Library}
}

@article{darolles2011nonparametric,
  title={Nonparametric instrumental regression},
  author={Darolles, Serge and Fan, Yanqin and Florens, Jean-Pierre and Renault, Eric},
  journal={Econometrica},
  volume={79},
  number={5},
  pages={1541--1565},
  year={2011},
  publisher={Wiley Online Library}
}

@article{miao2024confounding,
  title={A confounding bridge approach for double negative control inference on causal effects},
  author={Miao, Wang and Shi, Xu and Li, Yilin and Tchetgen Tchetgen, Eric J},
  journal={Statistical Theory and Related Fields},
  volume={8},
  number={4},
  pages={262--273},
  year={2024},
  publisher={Taylor \& Francis}
}

@article{santos2011instrumental,
  title={Instrumental variable methods for recovering continuous linear functionals},
  author={Santos, Andres},
  journal={Journal of Econometrics},
  volume={161},
  number={2},
  pages={129--146},
  year={2011},
  publisher={Elsevier}
}

@article{ai2012semiparametric,
  title={The semiparametric efficiency bound for models of sequential moment restrictions containing unknown functions},
  author={Ai, Chunrong and Chen, Xiaohong},
  journal={Journal of Econometrics},
  volume={170},
  number={2},
  pages={442--457},
  year={2012},
  publisher={Elsevier}
}

@article{carrasco2007linear,
  title={Linear inverse problems in structural econometrics estimation based on spectral decomposition and regularization},
  author={Carrasco, Marine and Florens, Jean-Pierre and Renault, Eric},
  journal={Handbook of econometrics},
  volume={6},
  pages={5633--5751},
  year={2007},
  publisher={Elsevier}
}

@article{bagozzi1977structural,
  title={Structural equation models in experimental research},
  author={Bagozzi, Richard P},
  journal={Journal of Marketing Research},
  volume={14},
  number={2},
  pages={209--226},
  year={1977},
  publisher={SAGE Publications Sage CA: Los Angeles, CA}
}

@article{zhang2025inverse,
  title={Inverse regression for causal inference with multiple outcomes},
  author={Zhang, Wei and Li, Qizhai and Ding, Peng},
  journal={arXiv preprint arXiv:2509.12587},
  year={2025}
}

@article{fu2025causal,
  title={Causal Inference for Experiments with Latent Outcomes: Key Results and Their Implications for Design and Analysis},
  author={Fu, Jiawei and Green, Donald P},
  journal={arXiv preprint arXiv:2505.21909},
  year={2025}
}

@book{kress1989linear,
  title={Linear integral equations},
  author={Kress, Rainer},
  volume={82},
  year={1989},
  publisher={Springer}
}

@article{miao2018identifying,
  title={Identifying causal effects with proxy variables of an unmeasured confounder},
  author={Miao, Wang and Geng, Zhi and Tchetgen Tchetgen, Eric J},
  journal={Biometrika},
  volume={105},
  number={4},
  pages={987--993},
  year={2018},
  publisher={Oxford University Press}
}

@article{chen2026thin,
  title={Thin Sets Are Not Equally Thin: Minimax Learning of Submanifold Integrals},
  author={Chen, Xiaohong and Gao, Wayne Yuan},
  year={2026}
}

@article{blundell2007semi,
  title={Semi-nonparametric IV estimation of shape-invariant Engel curves},
  author={Blundell, Richard and Chen, Xiaohong and Kristensen, Dennis},
  journal={Econometrica},
  volume={75},
  number={6},
  pages={1613--1669},
  year={2007},
  publisher={Wiley Online Library}
}

@article{schennach2022measurement,
  title={Measurement systems},
  author={Schennach, Susanne},
  journal={Journal of Economic Literature},
  volume={60},
  number={4},
  pages={1223--1263},
  year={2022},
  publisher={American Economic Association 2014 Broadway, Suite 305, Nashville, TN 37203-2425}
}

@article{rubin1980randomization,
  title={Randomization analysis of experimental data: The Fisher randomization test comment},
  author={Rubin, Donald B},
  journal={Journal of the American Statistical Association},
  volume={75},
  number={371},
  pages={591--593},
  year={1980},
  publisher={JSTOR}
}

@article{anderson1988structural,
  title={Structural equation modeling in practice: A review and recommended two-step approach.},
  author={Anderson, James C and Gerbing, David W},
  journal={Psychological Bulletin},
  volume={103},
  number={3},
  pages={411},
  year={1988},
  publisher={American Psychological Association}
}

@book{imbens2015causal,
  title={Causal inference in statistics, social, and biomedical sciences},
  author={Imbens, Guido W and Rubin, Donald B},
  year={2015},
  publisher={Cambridge university press}
}

@book{gerber2012field,
  title={Field experiments: Design, analysis, and interpretation},
  author={Gerber, Alan S and Green, Donald P},
  year={2012},
  publisher={W.W. Norton}
}

@incollection{kano2001structural,
  author    = {Kano, Yutaka},
  title     = {Structural Equation Modeling for Experimental Data},
  booktitle = {Structural Equation Models: Present and Future},
  editor    = {Boomsma, A. and Hoogland, J.\,J. and Cudeck, R. and Du Toit, S. and Sörbom, D.},
  pages     = {381--402},
  year      = {2000},
  publisher = {Scientific Software International, Inc.}
}

@book{angrist2009mostly,
  title={Mostly harmless econometrics: An empiricist's companion},
  author={Angrist, Joshua D and Pischke, J{\"o}rn-Steffen},
  year={2009},
  publisher={Princeton university press}
}

@article{kalla2020reducing,
  title={Reducing exclusionary attitudes through interpersonal conversation: Evidence from three field experiments},
  author={Kalla, Joshua L and Broockman, David E},
  journal={American Political Science Review},
  volume={114},
  number={2},
  pages={410--425},
  year={2020},
  publisher={Cambridge University Press}
}

@article{stoetzer2022causal,
  title={Causal inference with latent outcomes},
  author={Stoetzer, Lukas F and Zhou, Xiang and Steenbergen, Marco},
  journal={American Journal of Political Science},
  year={2025},
  pages={624--640},
 volume={69},
 number={2},
  publisher={Wiley Online Library}
}

@article{anderson2008multiple,
  title={Multiple inference and gender differences in the effects of early intervention: A reevaluation of the Abecedarian, Perry Preschool, and Early Training Projects},
  author={Anderson, Michael L},
  journal={Journal of the American Statistical Association},
  volume={103},
  number={484},
  pages={1481--1495},
  year={2008},
  publisher={Taylor \& Francis}
}

@article{sichart2025_countering,
  author  = {Florian Sichart},
  title   = {Countering Misinformation Early: Evidence from a Classroom-Based Field Experiment in India},
  journal = {American Political Science Review},
  year    = {2025},
  note    = {FirstView},
  doi     = {10.1017/S0003055425101184}
}

@article{bowles2025_sustaining,
  author  = {Jeremy Bowles and Kevin Croke and Horacio Larreguy and Shelley Liu and John Marshall},
  title   = {Sustaining Exposure to Fact-Checks: Misinformation Discernment, Media Consumption, and Its Political Implications},
  journal = {American Political Science Review},
  year    = {2025},
  note    = {FirstView},
  doi     = {10.1017/S0003055424001394}
}

@article{anderson2014_admin_units,
  author  = {Michael L. Anderson and Justin McGuire and others},
  title   = {Administrative Unit Proliferation},
  journal = {American Political Science Review},
  year    = {2014},
  volume  = {108},
  number  = {1},
  pages   = {196--211},
  doi     = {10.1017/S000305541300056X}
}

@article{andrabi2017_info_dissemination,
  author  = {Tahir Andrabi and Jishnu Das and Asim I. Khwaja},
  title   = {Information Dissemination, Competitive Pressure, and Politician Performance},
  journal = {American Political Science Review},
  year    = {2017},
  volume  = {111},
  number  = {1},
  pages   = {124--148}
}

@article{gottlieb2019_competition_publicgoods,
  author  = {Jessica Gottlieb and Patrick M. Singer and Laura Paler},
  title   = {The Countervailing Effects of Competition on Public Goods Provision: When Bargaining Power Matters},
  journal = {American Political Science Review},
  year    = {2019},
  volume  = {113},
  number  = {1},
  pages   = {85--107}
}

@article{cruz2017_network_structures,
  author  = {Cesi Cruz and Julien Labonne and Pablo Querub{\'\i}n},
  title   = {Social Network Structures and the Politics of Public Goods Provision: Evidence from the Philippines},
  journal = {American Political Science Review},
  year    = {2017},
  volume  = {111},
  number  = {4},
  pages   = {1027--1051}
}

@article{baccini2023_electoral_volatility,
  author  = {Leonardo Baccini and Abel Brodeur and Stephen C. Huang},
  title   = {Electoral Volatility and Economic Competition: Evidence from the United States, 1868--2016},
  journal = {American Political Science Review},
  year    = {2023},
  volume  = {117},
  number  = {2},
  pages   = {632--649}
}

@article{malis_smith2021_statevisits,
  author  = {Matt Malis and Alastair Smith},
  title   = {State Visits and Leader Survival},
  journal = {American Journal of Political Science},
  year    = {2021},
  volume  = {65},
  number  = {1},
  pages   = {241--256},
  doi     = {10.1111/ajps.12520}
}

@article{boggild2024_behave_badly,
  author  = {Troels B{\o}ggild and Lene Aar{\o}e and others},
  title   = {When Politicians Behave Badly: The Political, Democratic, and Normative Consequences of Scandals},
  journal = {American Journal of Political Science},
  year    = {2024},
  note    = {Early View},
  doi     = {10.1111/ajps.12897}
}

@article{davis_hitt2025_scotus_legitimacy,
  author  = {Nicholas T. Davis and Matthew P. Hitt},
  title   = {Partisan Sorting, Fatalism, and Supreme Court Legitimacy},
  journal = {American Journal of Political Science},
  year    = {2025},
  note    = {Early View},
  doi     = {10.1111/ajps.12972}
}

@article{mazeikaite_motta2025_grids,
  author  = {Goda Mazeikaite and Matt Motta},
  title   = {Do Grids Demobilize? How Street Networks, Social Geography, and Electoral Participation Interact},
  journal = {American Journal of Political Science},
  year    = {2025},
  note    = {Early View}
}

@article{carreri2021_good_politicians,
  author  = {Maria Carreri},
  title   = {Can Good Politicians Compensate for Bad Institutions? Evidence from an Original Survey of Italian Mayors},
  journal = {Journal of Politics},
  year    = {2021},
  volume  = {83},
  number  = {4},
  pages   = {1229--1245},
  doi     = {10.1086/715062}
}

@article{curiel2023_civic_inclusion,
  author  = {Mar{\'\i}a Ignacia Curiel and Cyrus Samii and Mateo V{\'a}squez-Cort{\'e}s},
  title   = {Democratic Integration of Former Insurgents: Evidence from a Civic Inclusion Campaign in Colombia},
  journal = {Journal of Politics},
  year    = {2023},
  volume  = {85},
  number  = {2}
}

@article{aid_refugees_uganda2025,
  author  = {First Author and Second Author and Third Author},
  title   = {Can Redistribution Change Policy Views? Aid and Attitudes toward Refugees in Uganda},
  journal = {Journal of Politics},
  year    = {2025},
  note    = {Online ahead of print},
  doi     = {10.1086/736209}
}

@article{grossman2017_fragmentation,
  author  = {Guy Grossman and Jan H. Pierskalla and Emma Boswell Dean},
  title   = {Government Fragmentation and Public Goods Provision},
  journal = {Journal of Politics},
  year    = {2017},
  volume  = {79},
  number  = {3},
  pages   = {823--840},
  doi     = {10.1086/690305}
}

@article{tavits2024fathers,
  title        = {Fathers' Leave Reduces Sexist Attitudes},
  author       = {Tavits, Margit and Schleiter, Petra and Homola, Jonathan and Ward, Dalston},
  journal      = {American Political Science Review},
  year         = {2024},
  volume       = {118},
  number       = {1},
  pages        = {488--494},
  doi          = {10.1017/S0003055423000369}
}

@article{carlos2021mundane,
  title        = {The Politics of the Mundane},
  author       = {Carlos, Roberto F.},
  journal      = {American Political Science Review},
  year         = {2021},
  volume       = {115},
  number       = {3},
  pages        = {775--789}
}

@article{barker2022hubris,
  title        = {Intellectualism, Anti-Intellectualism, and Epistemic Hubris in Red and Blue America},
  author       = {Barker, David C. and DeTamble, Ryan and Marietta, Morgan},
  journal      = {American Political Science Review},
  year         = {2022},
  volume       = {116},
  number       = {1},
  pages        = {38--53},
  doi          = {10.1017/S0003055421000988}
}

@article{barnes2022spending,
  title        = {Measuring Attitudes toward Public Spending Using a Multivariate Tax Summary Experiment},
  author       = {Barnes, Lucy and Blumenau, Jack and Lauderdale, Benjamin E.},
  journal      = {American Journal of Political Science},
  year         = {2022},
  volume       = {66},
  number       = {1},
  pages        = {205--221},
  doi          = {10.1111/ajps.12643}
}

@article{bove2024military,
  title        = {Military Culture and Institutional Trust},
  author       = {Bove, Vincenzo and Di Leo, Giuseppe and Giani, Leonardo},
  journal      = {American Journal of Political Science},
  year         = {2024},
  volume       = {68},
  number       = {2},
  pages        = {714--729},
  doi          = {10.1111/ajps.12745}
}

@article{grewal2024discrimination,
  title        = {Discrimination, Inclusion, and Anti-System Attitudes among Muslims in Germany},
  author       = {Grewal, Sharan and Hamid, Shadi},
  journal      = {American Journal of Political Science},
  year         = {2024},
  volume       = {68},
  number       = {2},
  pages        = {511--528},
  doi          = {10.1111/ajps.12735}
}

@article{stone2010valence,
  title        = {Candidate Valence and Ideological Positions in U.S. House Elections},
  author       = {Stone, Walter J. and Simas, Elizabeth N.},
  journal      = {American Journal of Political Science},
  year         = {2010},
  volume       = {54},
  number       = {2},
  pages        = {371--388},
  doi          = {10.1111/j.1540-5907.2010.00432.x}
}

@article{campbell2009civic,
  title        = {Civic Engagement and Education: An Empirical Test of the Sorting Model},
  author       = {Campbell, David E.},
  journal      = {American Journal of Political Science},
  year         = {2009},
  volume       = {53},
  number       = {4},
  pages        = {771--786}
}

@article{coppedge2008dimensions,
  title        = {Two Persistent Dimensions of Democracy: Contestation and Inclusiveness},
  author       = {Coppedge, Michael and Alvarez, Angel and Maldonado, Claudia},
  journal      = {The Journal of Politics},
  year         = {2008},
  volume       = {70},
  number       = {3},
  pages        = {632--647},
  doi          = {10.1017/S0022381608080663}
}

@article{testa2014orientations,
  title        = {Orientations toward Conflict and the Conditional Effects of Political Disagreement},
  author       = {Testa, Paul F. and Hibbing, Matthew V. and Ritchie, Melinda},
  journal      = {The Journal of Politics},
  year         = {2014},
  volume       = {76},
  number       = {3},
  pages        = {770--785},
  doi          = {10.1017/S0022381614000255}
}

@article{pan2018china,
  title        = {China's Ideological Spectrum},
  author       = {Pan, Jennifer and Xu, Yiqing},
  journal      = {The Journal of Politics},
  year         = {2018},
  volume       = {80},
  number       = {1},
  pages        = {254--273},
  doi          = {10.1086/694255}
}

@article{fortunato2016knowledge,
  title        = {National Variation in Partisan Left--Right Knowledge},
  author       = {Fortunato, David and Stevenson, Randolph T. and Vonnahme, Greg D.},
  journal      = {The Journal of Politics},
  year         = {2016},
  volume       = {78},
  number       = {4},
  pages        = {1211--1228},
  doi          = {10.1086/686689}
}

@article{jacobs2021whosenews,
  title        = {Whose News? Class-Biased Economic Reporting in the United States},
  author       = {Jacobs, Alan M. and Matthews, J. Scott and Hicks, Timothy and Merkley, Eric},
  journal      = {American Political Science Review},
  year         = {2021},
  volume       = {115},
  number       = {3},
  pages        = {1016--1033}
}
